\newtheorem{theorem}{Theorem}[section]
\newtheorem{corollary}{Corollary}[section]
\newtheorem{proposition}{Proposition}[section]
\newtheorem{lemma}{Lemma}[section]
\newtheorem*{proof}{Proof.}
\newtheorem{example}{Example}[section]
\newcommand{\bC}{{\mathbf{{C}}}}
\title{\bf Minimal Linear Codes Violating the Ashikhmin-Barg Condition from Arbitrary Projective Linear Codes}
\author{Hao Chen, Yaqi Chen, Conghui Xie and Huimin Lao\thanks{Hao Chen, Yaqi Chen and Conghui Xie are with the College of Information Science and Technology, Jinan University, Guangzhou, Guangdong Province, 510632, China (e-mail: haochen@jnu.edu.cn, chenyq@stu.jnu.edu.cn, conghui@stu2021.jnu.edu.cn). Huimin Lao is with School of Physical and Mathematical Sciences, Nanyang Technological University, Singapore (e-mail: huimin.lao@ntu.edu.sg). This research was supported by NSFC Grant 62032009.}}
\begin{document}
	
	\maketitle
	\begin{abstract}
		In recent years, there have been many constructions of minimal linear codes violating the Ashikhmin-Barg condition from Boolean functions, linear codes with few nonzero weights, posets, and partial difference sets. In this paper, we first give a general method to transform a minimal code satisfying the Ashikhmin-Barg condition to a minimal code violating the Ashikhmin-Barg condition. Then we give a construction of a minimal code satisfying the Ashikhmin-Barg condition from an arbitrary projective linear code. Hence an arbitrary projective linear code can be transformed to a minimal codes violating the Ashikhmin-Barg condition. Then we give infinitely many families of minimal codes violating the Ashikhamin-Barg condition. Weight distributions of constructed minimal codes violating the Ashikhmin-Barg condition in this paper are determined. Many minimal linear codes violating the Ashikhmin-Barg condition with their minimum weights equal to or close to the optimal or the best known minimum weights of linear codes are constructed in this paper. Moreover, many infinite families of self-orthogonal binary minimal codes violating the Ashikhmin-Barg condition are also given.\\

		{\bf Index terms:} Minimal linear code. Ashikhmin-Barg condition. Weight distribution.
	\end{abstract}
	
	\newpage

\section{Introduction}

\subsection{Preliminaries}

The Hamming weight $wt({\bf a})$ of a vector ${\bf a}=(a_0, \ldots, a_{n-1}) \in {\bf F}_q^n$ is the cardinality of its support, $$supp({\bf a})=\{i: a_i \neq 0\}.$$ The Hamming distance $d({\bf a}, {\bf b})$ between two vectors ${\bf a}$ and ${\bf b}$ is $d({\bf a}, {\bf b})=wt({\bf a}-{\bf b})$. Then ${\bf F}_q^n$ is a finite Hamming metric space. The minimum (Hamming) distance of a code ${\bf C} \subset {\bf F}_q^n$ is, $$d({\bf C})=\min_{{\bf a} \neq {\bf b}} \{d({\bf a}, {\bf b}),  {\bf a} \in {\bf C}, {\bf b} \in {\bf C} \}.$$  One of the main goals of coding theory is to construct codes ${\bf C} \subset {\bf F}_q^n$ with large cardinalities and minimum distances. In general, there are some upper bounds on cardinalities or minimum distances of codes.  Optimal codes attaining these bounds are particularly interesting, see \cite{MScode}. The Singleton bound for general codes asserts
$M \leq q^{n-d+1}$ and codes attaining this bound is called (nonlinear) maximal distance separable
(MDS) codes. Reed-Solomon codes are well-known linear MDS codes, see \cite{MScode}.  For a linear code ${\bf C} \subset {\bf F}_q^n$, we denote $A_i({\bf C})$ the number of codewords with the weight $i$, $0 \leq i \leq n$. In this paper, we call a linear $[n,k,d]_q$ code optimal, if there is no linear $[n,k,d+1]_q$ code. A linear $[n,k,d]_q$ code is called almost optimal, if there is no linear $[n,k,d+2]_q$ code and there exists a linear $[n,k,d+1]_q$ code.\\

Minimal codewords were first introduced in \cite{Huang} for decoding linear block codes. Then minimal binary linear codes were studied in \cite{CL} in the name linear interesting codes. In \cite{AB}, minimal codes and minimal codewords were introduced for general $q$-ary linear codes and studied systematically.  A nonzero codeword ${\bf c}$ of a linear $[n,k,d]_q$ code ${\bf C}$ is called minimal, if for any codeword ${\bf c}'$ satisfying $supp({\bf c}') \subseteq supp({\bf c})$, then there is a nonzero $\lambda \in {\bf F}_q$, such that, ${\bf c}'=\lambda {\bf c}$. A linear code ${\bf C}$ is called minimal if each nonzero codeword is minimal.\\

The Griesmer bound for a linear $[n, k, d]_q$ code in \cite{Griesmer} asserts $$n \geq \Sigma_{i=0}^{k-1} \lceil \frac{d}{q^i} \rceil,$$
or see \cite[Theorem 2.7.4, page 81]{HP}. Set $g_q(k,d)=\Sigma_{i=0}^{k-1} \lceil \frac{d}{q^i} \rceil$, for a linear $[n,k,d]_q$ code, the Griesmer defect is $g({\bf C})=n-g_q(k,d)$. A linear $[n, k, d]_q$ code attaining is called a Griesmer code, see \cite{Solomon,Hu}. It is clear that small Griesmer defect codes are more interesting, see \cite{Solomon,Hu}.\\

\subsection{Related works}

In \cite{AB}, the Ashikhmin-Barg criterion $\frac{w_{min}}{w_{max}} > \frac{q-1}{q}$ was proposed as a sufficient condition for a $q$-ary linear code to be minimal, where $w_{min}$ and $w_{max}$ are minimum and maximum weights of the code.  However, this is certainly not a necessary condition for the minimality of linear codes. In \cite{CH} the first family of minimal binary linear codes violating the Ashikhmin-Barg criterion was constructed. Then several families of minimal binary linear codes violating the Ashikhmin-Barg criterion were constructed in \cite{DHZ} from binary linear codes with few nonzero weights. From then many minimal binary codes violating the Ashikhmin-Barg criterion or with optimal lengthes were constructed from finite geometries, Boolean functions, trace representations, posets, or partial difference sets, see \cite{DHZ,HDZ,BB,BB1,LiYue,CDMT,TangLi,Xu,Tang,Sihem2,Sihem1,Sihem,Zhang,TFL,Kan,PRZW,PRZW1,Alon,Yan} and references therein. In particular, in \cite{Hyun,Hyun1} some optimal minimal codes violating the Ashikhmin-Barg conditions were constructed. We refer to \cite{Alfarano} for combinatorial properties of minimal linear codes. \\

In the following table, we list minimal linear codes violating the Ashikhmin-Barg conditions constructed in these previous papers. Only very few of these minimal codes are close to optimal codes or best known codes in \cite{Grassl}.\\

{\footnotesize
\begin{longtable}{|c|c|>{\centering\arraybackslash}m{4.2cm}|>{\centering\arraybackslash}m{3cm}|c|}

    \caption{\label{tab:1} Known minimal $q$-ary linear codes violating the Ashikhmin-Barg condition}\\ \hline
    $[n,k]$ & Minimal distance & Condition & Method & Reference\\ \hline
    $[3^m-1,m+1]_3$ & $\sum_{j = 1}^k 2^j\binom{m}{j}$ & $2\leq k\leq\left\lfloor\frac{m - 1}{2}\right\rfloor$ & Boolean functions & \cite{HDZ} \\ \hline
    \multirow{3}{*}{$[2^m-1,m+1]_2$} & $s(2^t-1)$ & $m = 2t,s\leq2^{t - 2}$ & \multirow{3}{*}{{Boolean functions}} & \multirow{3}{*}{\cite{DHZ}} \\ \cline{2-3}
     & $2^{m-1}-2^{m-s-1}(s - 1)$ & $s=\frac{m + 1}{2}$ &  &  \\ \cline{2-3}
     & $\sum_{j = 1}^k \binom{m}{j}$ &  $2\leq k\leq\left\lfloor\frac{m - 3}{2}\right\rfloor$ & &  \\ \hline
    $[2^m-1,m+1]_2$ & $2^{m-1}+\theta_2$ & - & Partial Difference Sets & \cite{TFL} \\ \hline
    $[\binom{m}{2},m-1]_2$ & $m - 1$ & \multirow{4}{*}{{-}} & \multirow{4}{*}{-} & \multirow{4}{*}{\cite{Yan}} \\ \cline{1-2}
    $[\binom{m}{2}+1,m-1]_2$ & $m$ &  & & \\ \cline{1-2}
    $[\binom{m}{1}+\binom{m}{2},m]_2$ & $m$ & & &  \\ \cline{1-2}
    $[\binom{m}{1}+\binom{m}{2}+1,m]_2$ & $m + 1$ & & &  \\ \hline

    $[2^m-1,m+2]_2$ & $2^{\frac{n}{2}}-2$ & - & {Vectorial } & \multirow{3}{*}{\cite{Kan}} \\ \cline{1-3}
    $[2^m-1,m+n]_2$ & $2^{m-2}+2^{\frac{m}{2}+1}$ & $2 < n\leq\frac{m}{2}+1$ & Boolean & \\ \cline{1-3}
    $[2^m-1,2m+1]_2$ & $2^{m-2}+2^{\frac{m}{2}+1}$ & - & Functions &\\ \hline

    $[2^m-1,m+1]_2$ & $2^{\frac{m-1}{2}-1}(\frac{m + 1}{2}+1)-1$ & - & Characteristic & \multirow{2}{*}{\cite{Sihem}} \\ \cline{1-3}
    $[p^m-1,m+1]_p$ & - & $p$ be prime & functions &  \\ \hline

     \multirow{5}{*}{$[p^m-1,m+1]_p$} & $p^{m-s-1}(p - 1)(s(p - 1)+1)$ & \makecell{$p$ be prime,\\$(p-1)(p^{s-2}-s)>1$ } & Maiorana & \multirow{5}{*}{\cite{XQC}} \\ \cline{2-3}
     & \multirow{3}{*}{$a(p^{m-s}-p^{m-s-1})(p^k-1)$}  & \makecell{$p$ be prime,\\$k\ge2,(k,p)\neq(2,2)$,} & McFarland  &  \\
     &  & $2\le a \le(p-1)p^{k-2},s=2k$ & functions &  \\ \hline
    $[p^m-1,m]_p$ & \multirow{2}{*}{$(p - 1)^2p^{m-2}$} & \multirow{3}{*}{$m>2$, $p$ be odd prime} & \multirow{3}{*}{\parbox{3cm}{\centering$p$-ary functions}} & \multirow{3}{*}{\cite{Xu}} \\ \cline{1-1}
    $[p^m-1,m-1]_p$ &  &  & & \\ \cline{1-2}
    $[p^m-1,m]_p$ & $p^{m-1}(p - 2)$ &  & & \\ \hline

    $[q^m-1,m+1]_q$ & $\sum_{j = 1}^k (q-1)^j\binom{m}{j}$ & $q = p^h$, $p$ be odd prime, $h \ge 1$ & Boolean functions & \cite{BB1} \\ \hline

\end{longtable}
}

\subsection{Our contributions}

In this paper, we prove the following results.\\

{\bf Theorem 1.1.} {\em Let ${\bf C} \subset {\bf F}_q^n$ be a $q$-ary $k$-dimensional linear code with the maximum and minimum weights $w_{max}$ and $w_{min}$. Suppose that the Ashikhmin-Barg condition of this code ${\bf C}$ $$\frac{w_{min}}{w_{max}} >\frac{q-1}{q}$$ is satisfied and thus ${\bf C}$ is a minimal code. Set $$n'=\lceil \frac{qw_{min}}{q-1} \rceil -w_{max}.$$ Then we construct an explicit $q$-ary minimal linear code ${\bf C}' \subset {\bf F}_q^{n+n'}$ with the maximum weight at least $w_{max}+n'$ and the minimum weight $w_{min}$  satisfying $$\frac{w_{min}}{w_{max}+n'} \leq \frac{q-1}{q}.$$ The linear code ${\bf C}'$ is a minimal linear code violating the Ashikhmin-Barg condition.}\\

{\bf Theorem 1.2.} {\em Let ${\bf C} \subset {\bf F}_q^n$ be a projective linear $[n,k]_q$ code with the maximum weight $w\leq n$. Let $h$ be a positive integer. Then we construct explicit simplx complementary $[\frac{q^{k+h}-1}{q-1}-n, k+h]_q$ code with the maximum  weight $q^{k+h-1}$ and the minimum weight $q^{k+h-1}-w$. When $h >\log_q w-k+2$, this simplex complementary code is a minimal code satisfying the Ashikhmin-Barg condition.} \\

{\bf Corollary 1.1.} {\em
    Let $\mathbf{C}' \subset \mathbf{F}_q^{n+n'}$ be the minimal linear code constructed in Theorem 1.1 from an $[n, k, d]$ linear code $\mathbf{C}$ with the generator matrix ${\bf G}$ and the weight distribution $[A_0,A_{w_1},A_{w_2},\dots,A_{w_t}]$. Let $\bC^* \subset \mathbb{F}_q^n$ denote the $(k-1)$-dimensional subcode of $\bC$ generated by removing the first row of ${\bf G}$, and assume its weight distribution is $[A^*_0,A^*_{w_1}, A^*_{w_2}, \dots, A^*_{w_t}]$. Then the set of nonzero weights in $\bC'$ is $$\{w_1,w_2,\dots,w_t \} \cup \{w_1+n',w_2+n',\dots,w_t+n' \}.$$ More precisely, the weight distribution of $\bC'$ is given by:
    $$[A'_0,A'_{w_1}, \dots, A'_{w_t}, A'_{w_1 + n'}, \dots, A'_{w_t + n'}],$$
    where
    $$A'_0=1, \ A'_{w_i} = A^*_{w_i} \ \text{and} \ A'_{w_i + n'} = A_{w_i} - A^*_{w_i}, \quad 1 \le i \le t.$$\\
}

{\bf Corollary 1.2.} {\em If the code ${\bf C}\subset {\bf F}_q^n$ is a self-orthogonal linear code satisfying the Ashikhmin-Barg condition, then we can construct a minimal self-orthogonal linear code ${\bf C}' \subset {\bf F}_q^{n+n'+1}$ violating the Ashikhmin-Barg condition.}\\

The rest of this paper is organized as follows. In Section 2, we construct a minimal code violating the Ashikhmin-Barg condition from a minimal code satisfying the Ashikhmin-Barg condition. In Section 3, we construct a minimal code satisfying the Ashikhmin-Barg condition from an arbitrary projective linear code. Thus minimal linear codes violating the Ashikhmin-Barg condition can be constructed from arbitrary projective linear codes. In Section 4, we present infinitely many families of minimal linear codes violating the Ashikhmin-Barg condition. In Section 5, we construct self-orthogonal binary minimal linear codes that violate the Ashikhmin-Barg condition. In Section 6, the weight distributions of several families of  minimal codes violating the Ashikhmin-Barg condition are determined. Section 7 concludes the paper.\\

\section{Minimal codes violating the Ashikhmin-Barg condition from codes satisfying the Ashikhmin-Barg condition}

In this section, we prove the main results.\\

{\bf Theorem 2.1.} {\em Let ${\bf C} \subset {\bf F}_q^n$ be a $q$-ary $k$-dimensional linear code with the maximum and minimum weights $w_{max}$ and $w_{min}$. Suppose that the Ashikhmin-Barg condition of this code ${\bf C}$ $$\frac{w_{min}}{w_{max}} >\frac{q-1}{q}$$ is satisfied and thus ${\bf C}$ is a minimal code. Set $$n'=\lceil \frac{qw_{min}}{q-1} \rceil -w_{max}.$$ Then we construct an explicit $q$-ary minimal linear code ${\bf C}' \subset {\bf F}_q^{n+n'}$ with the maximum weight at least $w_{max}+n'$ and the minimum weight $w_{min}$  satisfying $$\frac{w_{min}}{w_{max}+n'} \leq \frac{q-1}{q}.$$}\\

{\bf Proof.} Let ${\bf G}$ be a generator matrix of the linear code ${\bf C}$, with the first row ${\bf r}_1$ equal to a codeword $(c_1,\ldots, c_n)$ with the maximum weight $w_{max}$, and the second row ${\bf r}_2$ equal to a codeword with the minimum weight $w_{min}$. Let ${\bf G}'$ be a $k \times (n+n')$ matrix with the first row ${\bf r}_1'$ equal to $(c_1',\ldots, c_n',c_1, \ldots, c_n)$, where $c_1', \ldots, c_n'$ are nonzero elements in ${\bf F}_q$, and the other rows are of the form ${\bf r}_i'=({\bf 0}, {\bf r}_i)$, where ${\bf 0}$ is the zero vector in ${\bf F}_q^{n'}$, ${\bf r}_i$ are $i$-th row of ${\bf G}$, $i=2,\ldots,k$.\\

Now it is clear that the matrix ${\bf G}'$ is full rank and the linear code in ${\bf F}_q^{n+n'}$ generated by ${\bf G}'$ is the linear code ${\bf C}'$. The maximum weight of ${\bf C}'$ is at least $n'+w_{max}$, since the Hamming weight of the first row of ${\bf G}'$ is $w_{max}+n'$.  We now prove that ${\bf C}'$ is a minimal code. Let ${\bf x}_1$ and ${\bf x}_2$ be two nonzero codewords in ${\bf C}'$ satisfying $$supp({\bf x}_2) \subseteq supp({\bf x}_1),$$ we analysis the following two cases.\\

1) ${\bf x}_1$ is the linear combination of rows ${\bf r}_2',\ldots, {\bf r}_k'$ of ${\bf G}'$. In this case, the support of ${\bf x}_1$ is in the last $n$ coordinate positions. Then the support of ${\bf x}_2$ is also in the last $n$ coordinate positions. The codeword ${\bf x}_2$ is a linear combination of ${\bf r}_2',\ldots, {\bf r}_k'$ rows of ${\bf G}'$. Then from the minimality of the linear code ${\bf C}$, there exists a nonzero $\lambda \in {\bf F}_q$ such that ${\bf x}_2=\lambda{\bf x}_1$.\\

2) ${\bf x}_1$ is the linear combination of rows ${\bf r}_1', \ldots, {\bf r}_k'$ of ${\bf G}'$. Suppose that $${\bf x}_1=\mu_1{\bf r}_1'+\mu_2 {\bf c},$$ where $\mu_1\neq 0$, and ${\bf c}$ is a linear combination of rows ${\bf r}_2',\ldots, {\bf r}_k'$ in ${\bf G}'$. Firstly, the support of ${\bf x}_2$ cannot be in the last $n$ coordinate positions. Let ${\bf x}_i'$ be the vector in ${\bf F}_q^n$ by puncturing the first $n'$ coordinates of ${\bf x}_i$, $i=1,2$. They are two nonzero codewords of ${\bf C}$, and we have $supp({\bf x}_2') \subset  supp({\bf x}_1')$. Then ${\bf x}_2'$ is of the form $\lambda'{\bf x}_1'$, where $\lambda'\neq 0$, from the minimality of the code ${\bf C}$. Therefore, ${\bf x}_2$ cannot be a linear combination of only ${\bf r}_2', \ldots, {\bf r}_k'$ in the generator matrix ${\bf G}'$. Then the support of ${\bf x}_2$ has to include the first $n'$ coordinate positions. From a similar argument as above, we also conclude that there exists a nonzero $\lambda''$ satisfying $${\bf x}_2=\lambda''{\bf x}_1.$$ Then the linear code ${\bf C}'$ is a minimal.\\

The maximum weight of the code ${\bf C}'$ is at least $$w_{max}({\bf C}') \geq w_{max}+n'.$$ It is easy to verify that $$\frac{w_{\min}({\bf C}')}{w_{max}({\bf C}')} \leq \frac{w_{min}}{w_{max}+n'} \leq \frac{q-1}{q}.$$ The minimal code ${\bf C}'$ violates the Ashikhmin-Barg condition. The conclusion is proved.\\

{\bf Corollary 2.1.} {\em If the code ${\bf C}\subset {\bf F}_q^n$ is a self-orthogonal linear code satisfying the Ashikhmin-Barg condition, then we can construct a minimal self-orthogonal linear code ${\bf C}' \subset {\bf F}_q^{n+n'+1}$ violating the Ashikhmin-Barg condition.}\\

{\bf Proof.} Since the code ${\bf C}$ is self-orthogonal, we only need to impose a condition on coordinates of $n'$ positions to guarantee the linear code ${\bf C}'$ is self-orthogonal.\\

% {\bf Example 2.1.} from binary two-weight minimal codes with two weights $w_1<w_2$ satisfying $$\frac{w_1}{w_2}>\frac{1}{2}$$ to minimal codes violating AB from Theorem 2.1.\\

%{\bf Example 2.2.} from self-orthogonal binary simplex codes...\\

% {\bf Example 2.1.} From ternary cyclic self-orthogonal codes with parameters $[26,6,15]_3$ and maximum weight $21$, we construct a minimal ternary self-orthogonal $[29,6,15]_3$ code with maximum weight $24$. This code violates the Ashikhmin-Barg condition, since $\frac{15}{24} < \frac{2}{3}$. Moreover, this code is almost optimal since the optimal distance of a $[29,6]_3$ code is $16$, see \cite{Grassl}.\\

% More examples of almost optimal and near optimal minimal self-orthogonal codes that violate the Ashikhmin-Barg condition, derived using Theorem 2.2, are listed in the Appendix \ref{A-1}.

\section{Minimal codes satisfying the Ashikhmin-Barg condition from arbitrary projective linear codes}

In this section, we prove the main result Theorem 3.1. The construction is based on our previous construction in \cite{ChenXie}. We recall the simplex complementary code of a projective linear code introduced in \cite{ChenXie}.\\

The simplex complementary code of a projective linear $[n,k]_q$ code ${\bf C}$ with the maximum weight $w$ is defined as follows. Here the length satisfies the condition $n<q^{k-1}$.  Let ${\bf G}$ be the $k \times n$ generator matrix of the projective linear code ${\bf C}$. Then $n$ columns ${\bf g}_1, \ldots, {\bf g}_n$ of ${\bf G}$ are distinct vectors of ${\bf F}_q^k$. By deleting these $n$ distinct columns in the generator matrix of the $q$-ary simplex $[\frac{q^k-1}{q-1},k, q^{k-1}]_q$ code, we get a linear $[\frac{q^k-1}{q-1}-n, k, q^{k-1}-w]_q$ code. The dimension is $k$, because there are $\frac{q^k-1}{q-1}-n>\frac{q^{k-1}-1}{q-1}$ columns in the generator matrix of this code. These columns cannot be located in a $k-1$ dimension linear subspace.\\

More generally, let ${\bf g}_1, \ldots, {\bf g}_n$ be $n$ distinct columns of the projective linear $[n,k]_q$ code ${\bf C} \subset {\bf F}_q^n$. By appending some zero coordinates, we can consider ${\bf g}_1, \ldots, {\bf g}_n$ as $n$ distinct vectors in ${\bf F}_q^K$, where $K>k$. By deleting these columns in $\frac{q^K-1}{q-1}$ columns of the $q$-ary simplex $[\frac{q^K-1}{q-1},K, q^{K-1}]_q$ code, we get a projective $[\frac{q^K-1}{q-1}-n, K, q^{K-1}-w]_q$ code. It is easy to verify that the maximum weight of this simplex complementary $[\frac{q^K-1}{q-1}-n, K, q^{K-1}-w]_q$ code is $q^{K-1}$.\\

\begin{theorem}
    Let ${\bf C} \subset {\bf F}_q^n$ be a projective linear $[n,k]_q$ code with the maximum weight $w\leq n$. Let $h$ be a positive integer. Then the simplx complementary code of ${\bf C}$ has parameters $[\frac{q^{k+h}-1}{q-1}-n, k+h, q^{k+h-1}-w]_q$ and maximum weight $q^{k+h-1}$. When $h >\log_q w-k+2$, this simplex complementary code is a minimal code satisfying the Ashikhmin-Barg condition.
\end{theorem}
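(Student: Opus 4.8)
The plan is to verify the three claims in the theorem in sequence: the parameters of the simplex complementary code, its maximum weight, and finally the Ashikhmin--Barg inequality under the hypothesis $h > \log_q w - k + 2$.

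First I would establish the parameters. Setting $K = k+h$, the construction in the paragraph preceding the theorem embeds the $n$ distinct projective columns $\bg_1,\dots,\bg_n$ into $\F_q^K$ and deletes them from the generator matrix of the $q$-ary simplex $[\frac{q^K-1}{q-1}, K, q^{K-1}]_q$ code; this gives a code of length $\frac{q^K-1}{q-1} - n$. For the dimension, I would check that $\frac{q^{k+h}-1}{q-1} - n > \frac{q^{K-1}-1}{q-1}$, which holds since $n \le q^{k-1} < q^{K-1}$ when $h \ge 1$ (actually one needs $n < q^{K-1}$, and since $w \le n$ and the code is projective, $n \le \frac{q^k-1}{q-1} < q^{K-1}$); hence the remaining columns cannot all lie in a hyperplane of $\F_q^K$, so the dimension is exactly $K = k+h$. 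For the minimum weight: a nonzero codeword of the simplex code has weight exactly $q^{K-1}$ (constant-weight code), and deleting the columns $\bg_1,\dots,\bg_n$ removes from the support of a codeword $\bu^\top G$ exactly those indices $j$ with $\langle \bu, \bg_j\rangle \ne 0$; the number of such $j$ is at most the maximum weight $w$ of $\bC$ (and exactly the weight of the corresponding codeword of $\bC$, or $0$ if $\bu$ is orthogonal to the span — but the span is all of $\F_q^k$ since $\bC$ is projective of dimension $k$, padded by zeros in the extra $h$ coordinates). So each nonzero codeword has weight $q^{K-1} - (\text{some weight of } \bC \text{ or } 0)$, giving minimum weight $q^{K-1} - w$ and maximum weight $q^{K-1}$ (attained by any $\bu$ supported on the last $h$ padded coordinates, for which all the $\langle\bu,\bg_j\rangle = 0$). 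This simultaneously proves the parameter and maximum-weight claims.

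For the minimality via Ashikhmin--Barg, I need $\frac{w_{min}}{w_{max}} = \frac{q^{k+h-1} - w}{q^{k+h-1}} > \frac{q-1}{q}$, equivalently $q(q^{k+h-1} - w) > (q-1)q^{k+h-1}$, i.e. $q^{k+h-1} > qw$, i.e. $q^{k+h-2} > w$, i.e. $k+h-2 > \log_q w$, i.e. $h > \log_q w - k + 2$. This is exactly the stated hypothesis, so the inequality is strict and Theorem (the Ashikhmin--Barg sufficient condition recalled in Section 1) yields that the code is minimal. I would present this last step as a short chain of equivalent inequalities.

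The only subtlety — which I expect to be the main point requiring care rather than a genuine obstacle — is justifying that every nonzero codeword of the simplex complementary code has the form $q^{K-1} - (\text{a weight of } \bC) $ with the convention that the zero weight is included: this requires that when $\bu \in \F_q^K$ restricts to a nonzero vector on the first $k$ coordinates, the deleted support size equals the weight of the corresponding nonzero codeword of $\bC$, while if $\bu$ restricts to zero on the first $k$ coordinates the deleted support is empty and the weight is the full $q^{K-1}$. Since $\bC$ is projective its columns $\bg_1, \ldots, \bg_n$ are distinct nonzero vectors spanning $\F_q^k$, so this bookkeeping is routine, and it simultaneously pins down $w_{min} = q^{K-1} - w$ and $w_{max} = q^{K-1}$ exactly, which is what the Ashikhmin--Barg computation needs.
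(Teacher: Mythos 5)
Your proposal is correct and follows essentially the same route as the paper: the paper delegates the length, dimension, minimum-weight and maximum-weight claims to the deletion construction described just before the theorem (citing the earlier work on simplex complementary codes), and its proof consists only of the chain of inequalities showing $\frac{q^{k+h-1}-w}{q^{k+h-1}}>\frac{q-1}{q}$ is equivalent to $h>\log_q w-k+2$, exactly as in your last paragraph. Your write-up is in fact slightly more careful than the paper's, both in justifying the dimension via $n\leq\frac{q^k-1}{q-1}<q^{k+h-1}$ and in keeping the inequality strict (the paper's proof writes $h\geq\log_q w-k+2$, which would only give a non-strict Ashikhmin--Barg ratio when $w$ is a power of $q$).
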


\begin{proof}
    {\rm It is easy to verify that if $h \geq log_qw-k+2$, we have $$\frac{q^{k+h-1}-w}{q^{k+h-1}}>\frac{q-1}{q}.$$ Then this simplex complementary $[\frac{q^{k+h}-1}{q-1}-n, k+h, q^{k+h-1}-w]_q$ code is a minimal code satisfying the Ashikhmin-Barg condition. The conclusion is proved.
    }
\end{proof}

By Theorem 2.1 and Theorem 3.1, we have the following corollary.

\begin{corollary}
    From the simple complementary $[\frac{q^{k+h}-1}{q-1}-n, k+h, q^{k+h-1}-w]_q$ code with the maximum weight $q^{k+h-1}$, we construct a minimal code with parameters $[\frac{q^{k+h}-1}{q-1}-n+n', k+h, q^{k+h-1}-w]_q$ and maximum weight $q^{k+h-1}+n'$. It is a minimal code that violates the Ashikhmin-Barg condition.
\end{corollary}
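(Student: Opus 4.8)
The plan is to read Corollary 3.1 purely as an instantiation of Theorem 2.1 applied to the minimal code produced by Theorem 3.1, so that essentially nothing new needs to be proved beyond bookkeeping the parameters. First I would invoke Theorem 3.1: for $h > \log_q w - k + 2$, the simplex complementary code $\mathbf{C}_0 \subset \mathbf{F}_q^{N}$ with $N = \frac{q^{k+h}-1}{q-1} - n$ is an $[N, k+h, q^{k+h-1}-w]_q$ minimal linear code with maximum weight $w_{max} = q^{k+h-1}$ satisfying the Ashikhmin-Barg condition $\frac{w_{min}}{w_{max}} = \frac{q^{k+h-1}-w}{q^{k+h-1}} > \frac{q-1}{q}$. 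Here $w_{min} = q^{k+h-1}-w$ is the minimum weight and $k+h$ is the dimension.

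Next I would feed $\mathbf{C}_0$ into Theorem 2.1 as the input code ``$\mathbf{C}$''. Theorem 2.1 then manufactures a minimal code $\mathbf{C}' \subset \mathbf{F}_q^{N + n'}$ that violates the Ashikhmin-Barg condition, where, by the formula in Theorem 2.1, $$n' = \left\lceil \frac{q\, w_{min}}{q-1} \right\rceil - w_{max} = \left\lceil \frac{q(q^{k+h-1}-w)}{q-1} \right\rceil - q^{k+h-1}.$$ The length of $\mathbf{C}'$ is $N + n' = \frac{q^{k+h}-1}{q-1} - n + n'$, the dimension is unchanged at $k+h$ (Theorem 2.1 preserves dimension), the minimum weight is preserved at $w_{min} = q^{k+h-1} - w$, and the maximum weight is at least $w_{max} + n' = q^{k+h-1} + n'$ — which is exactly what Corollary 3.1 claims. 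The final sentence of the corollary, that $\mathbf{C}'$ violates the Ashikhmin-Barg condition, is precisely the conclusion of Theorem 2.1, so it transfers verbatim.

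The only genuinely substantive point is to double-check that the hypotheses of Theorem 2.1 are met, namely that $\mathbf{C}_0$ actually satisfies the Ashikhmin-Barg condition — but this is guaranteed by the hypothesis $h > \log_q w - k + 2$ through Theorem 3.1, so there is nothing to verify beyond citing it. I would also remark, for clarity, that one should take the row of the generator matrix of $\mathbf{C}_0$ realizing weight $q^{k+h-1}$ as the ``first row'' $\mathbf{r}_1$ and a minimum-weight row as $\mathbf{r}_2$ in the construction of Theorem 2.1; such rows exist because $\mathbf{C}_0$ has these weights among its codewords and one can always choose a generator matrix whose rows include any two prescribed linearly independent codewords. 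I expect no real obstacle here: the proof is a one-line composition of the two preceding theorems, and the ``hard part,'' if any, is simply confirming that the parameter $n'$ stated in Corollary 3.1 coincides with the $n'$ dictated by Theorem 2.1 applied to $w_{min} = q^{k+h-1}-w$ and $w_{max} = q^{k+h-1}$.

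\begin{proof}
{\rm This is an immediate consequence of Theorem 3.1 and Theorem 2.1. By Theorem 3.1, when $h > \log_q w - k + 2$, the simplex complementary code $\mathbf{C}_0$ of $\mathbf{C}$ is an $[\frac{q^{k+h}-1}{q-1}-n, k+h, q^{k+h-1}-w]_q$ minimal linear code with maximum weight $w_{max} = q^{k+h-1}$ and minimum weight $w_{min} = q^{k+h-1}-w$ satisfying the Ashikhmin-Barg condition. Applying Theorem 2.1 to $\mathbf{C}_0$, with $n' = \lceil \frac{q w_{min}}{q-1}\rceil - w_{max}$, we obtain a minimal linear code $\mathbf{C}' \subset \mathbf{F}_q^{\frac{q^{k+h}-1}{q-1}-n+n'}$ of dimension $k+h$, with minimum weight $q^{k+h-1}-w$ and maximum weight at least $q^{k+h-1}+n'$, which violates the Ashikhmin-Barg condition. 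This is the asserted code.
}
\end{proof}
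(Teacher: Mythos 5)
Your proposal is correct and matches the paper's own treatment: the paper derives Corollary 3.1 with exactly the one-line composition you describe ("By Theorem 2.1 and Theorem 3.1, we have the following corollary"), with no further argument given. The only cosmetic difference is that you carefully state the maximum weight as "at least" $q^{k+h-1}+n'$ (which is all Theorem 2.1 guarantees), whereas the corollary asserts equality; that exact value follows from the weight-distribution analysis in Theorem 6.1, but your phrasing is, if anything, the more defensible one.
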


\section{Infinitely many families of minimal linear codes violating the Ashikhmin-Barg codndition}

In this section, we present infinitely many families of explicit minimal linear codes that violate the Ashikhmin-Barg condition. Firstly, we directly transform known minimal codes satisfying the Ashikhmin-Barg condition into minimal codes violating the Ashikhmin-Barg condition via Theorem 2.1. Secondly, we employ Theorem 3.1 to convert arbitrary projective linear codes into minimal codes satisfying the Ashikhamin-Barg condition,  which are then transformed to minimal codes violating the Ashikhmin-Barg condition from Theorem 2.1.

\subsection{From minimal Solomon-Stiffler codes}

In this subsection, we construct a family of binary minimal linear codes violating the Ashikhmin-Barg condition from binary minimal Solomon-Stiffler codes.\\

The binary Solomon-Stiffler code is constructed by removing carefully selected columns from the generator matrix of the binary simplex $[2^k - 1,\ k,\ 2^{k-1}]_2$ code. Let $u_1, u_2, \dots, u_t$ be integers such that $k > u_t > \cdots > u_1 \geq 1$, and let $S_1, \dots, S_t$ be mutually disjoint subspaces of $\mathbb{F}_2^k$ with $\dim(S_i) = u_i$. Collecting all nonzero vectors from each $S_i$ forms a set $\mathcal{F}$ of pairwise disjoint columns. By removing the columns indexed by $\mathcal{F}$ from the simplex generator matrix, we obtain a binary linear code with parameters
$$
[2^k - 1 - \sum_{i=1}^t (2^{u_i} - 1),\ k,\ 2^{k-1} - \sum_{i=1}^t 2^{u_i - 1}]_2,
$$
which meets the Griesmer bound.\\

\begin{proposition}\label{P-4.1}
    Let $u_1, u_2, \dots, u_t$ be integers satisfying $k-1 > u_t > \cdots > u_1 \geq 1$, $\sum_{i=1}^{t}u_i <k$. The binary Solomon-Stiffler $[2^k - 1 - \sum_{i=1}^t (2^{u_i} - 1),\ k,\ 2^{k-1} - \sum_{i=1}^t 2^{u_i - 1}]_2$ code is a minimal code satisfying the Ashikhmin-Barg condition. A family of binary minimal codes with parameters $[2^k +2^{k-1} + t - 1 - \sum_{i=1}^t 2^{u_i+1},\ k,\ 2^{k-1} - \sum_{i=1}^t 2^{u_i - 1}]_2$ is constructed from these binary Solomon-Stiffler codes. These codes are minimal codes violating the Ashikhmin-Barg condition.
\end{proposition}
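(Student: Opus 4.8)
The plan is to show that the binary Solomon-Stiffler code ${\bf C}$ appearing in the statement satisfies the Ashikhmin-Barg condition, and then to apply Theorem 2.1 to it, tracking the parameters of the resulting code ${\bf C}'$. First I would write down the weights of ${\bf C}$ explicitly. Since ${\bf C}$ is the code obtained from the binary simplex $[2^k-1,k,2^{k-1}]_2$ code by deleting the columns indexed by $\mathcal{F}=\bigcup_{i=1}^t(S_i\setminus\{{\bf 0}\})$, every nonzero codeword of ${\bf C}$ corresponds to a unique ${\bf a}\in{\bf F}_2^k\setminus\{{\bf 0}\}$ and has weight $\wt({\bf a})=2^{k-1}-\sum_{i\,:\,{\bf a}\notin S_i^{\perp}}2^{u_i-1}$, because the linear functional ${\bf v}\mapsto{\bf a}\cdot{\bf v}$ is either identically $0$ on $S_i$ (when ${\bf a}\in S_i^{\perp}$) or equals $1$ on exactly $2^{u_i-1}$ of the nonzero vectors of $S_i$. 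Hence $w_{max}=2^{k-1}$, achieved for any ${\bf a}\in(\sum_{i}S_i)^{\perp}\setminus\{{\bf 0}\}$; this set is nonempty exactly because $\dim(\sum_{i=1}^{t}S_i)\le\sum_{i=1}^{t}u_i<k$, which is where the hypothesis $\sum u_i<k$ enters. Likewise $w_{min}=2^{k-1}-\sum_{i=1}^{t}2^{u_i-1}$, the minimum distance recalled above, attained for any ${\bf a}\notin\bigcup_i S_i^{\perp}$ (such an ${\bf a}$ exists by a crude union-bound count, since $\sum_i 2^{k-u_i}<2^k$).

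Then I would check the Ashikhmin-Barg inequality $w_{min}/w_{max}>1/2$, which with $w_{max}=2^{k-1}$ is equivalent to $\sum_{i=1}^{t}2^{u_i}<2^{k-1}$. Because the $u_i$ are distinct, $u_{t-j}\le u_t-j$, so $\sum_{i=1}^{t}2^{u_i}\le\sum_{j=0}^{t-1}2^{u_t-j}<2^{u_t+1}\le2^{k-1}$, the last inequality being exactly the hypothesis $u_t<k-1$. Thus ${\bf C}$ is a minimal code satisfying the Ashikhmin-Barg condition, and Theorem 2.1 applies with a generator matrix of ${\bf C}$ whose first and second rows are a weight-$w_{max}$ codeword and a weight-$w_{min}$ codeword, respectively.

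Next I would compute the parameters of ${\bf C}'$. For $q=2$ and the integer $w_{min}$ we get $n'=\lceil 2w_{min}\rceil-w_{max}=2w_{min}-w_{max}=2^{k-1}-\sum_{i=1}^{t}2^{u_i}$, which is a positive integer by the previous step. Hence ${\bf C}'$ has length
$$n+n'=\Big(2^k-1-\sum_{i=1}^{t}(2^{u_i}-1)\Big)+\Big(2^{k-1}-\sum_{i=1}^{t}2^{u_i}\Big)=2^k+2^{k-1}+t-1-\sum_{i=1}^{t}2^{u_i+1},$$
dimension $k$, and minimum weight $w_{min}=2^{k-1}-\sum_{i=1}^{t}2^{u_i-1}$; by Theorem 2.1 it is minimal and violates the Ashikhmin-Barg condition, which is precisely the asserted family.

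The main obstacle is the first step: correctly pinning down $w_{max}$ and $w_{min}$ for the Solomon-Stiffler code. Both hypotheses are used there and essentially only there — $\sum u_i<k$ guarantees a codeword of weight exactly $2^{k-1}$ so that $w_{max}=2^{k-1}$, and $u_t<k-1$ guarantees that the Ashikhmin-Barg gap is strict. Everything after that is bookkeeping with the formula $n'=\lceil qw_{min}/(q-1)\rceil-w_{max}$ from Theorem 2.1 and the elementary estimate $\sum_{i=1}^{t}2^{u_i}<2^{u_t+1}$.
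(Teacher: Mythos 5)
Your proposal is correct and follows the same route as the paper: establish $w_{max}=2^{k-1}$ and $w_{min}=2^{k-1}-\sum_i 2^{u_i-1}$, verify the Ashikhmin--Barg ratio exceeds $\tfrac12$, and apply Theorem 2.1 with $n'=2w_{min}-w_{max}=2^{k-1}-\sum_i 2^{u_i}$ to get the stated length. The paper's proof simply asserts the maximum weight and the inequality, whereas you supply the justifications (the weight formula via $S_i^{\perp}$, the role of $\sum u_i<k$ and $u_t<k-1$, and the union-bound existence of a minimum-weight codeword), all of which check out.
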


{\bf Proof.} When $\sum_{i=1}^{t}u_i <k$, $u_t<k-1$, the maximum weight is $2^{k-1}$. Therefore, the binary Solomon-Stiffler codes are minimal codes satisfying the Ashikhmin-Barg condition, since $$\frac{2^{k-1}-\sum_{i=1}^{t}2^{u_i-1} }{2^{k-1}} >\frac{1}{2}.$$ Then, minimal codes violating the Ashikhmin-Barg condition can be constructed from binary Solomon-Stiffler codes, based on the construction of Theorem 2.1.\\

\begin{example}{\rm
    When $k=4$, $t=2$, $(u_1,u_2)=(1,2)$, the corresponding Solomon-Stiffler $[11,4,5]_2$ code has the maximum weight $8$. Then a minimal $[13,4,5]_2$ code with the maximum weight $10$ is constructed from Theorem 2.1. This minimal code violates the Ashikhmin-Barg condition and is almost optimal, see \cite{Grassl}.}
\end{example}

\begin{example}{\rm When $k=5$, $t=2$, $(u_1,u_2)=(1,3)$, the corresponding Solomon-Stiffler code is a binary linear $[23,5,11]_2$ code with the maximum weight $16$. Then a minimal $[29,5,11]_2$ code with the maximum weight $22$ is constructed from this $[23,5,11]_2$ code, from Theorem 2.1. This minimal code violates the Ashikhmin-Barg condition. The optimal weight of a linear $[29,5]_2$ code is $14$, see \cite{Grassl}.}
\end{example}

Table \ref{tab:Solomon} lists the binary minimal codes constructed in Proposition \ref{P-4.1}. Here, $\bC$ denotes the original Solomon-Stiffler code satisfying the Ashikhmin-Barg condition,  while $\bC'$ denotes the minimal code that violates the Ashikhmin-Barg condition from Theorem 2.1. Notably, in cases with small dimension, the minimum weight of $\bC'$ is close to the best known weight of the corresponding linear code in \cite{Grassl}.

\begin{longtable}{|l|l|l|}
    \caption{\label{tab:Solomon} Binary Minimal codes violating the AB condition constructed in Proposition \ref{P-4.1} }\\ \hline
    $\bC$ & $\bC'$ & Optimal or best known Parameters \\ \hline
    $[6,3,3]_2$  & $[8,3,3]_2$  & $[8,3,4]_2$ \\ \hline
    $[11,4,5]_2$  & $[13,4,5]_2$  & $[13,4,6]_2$ \\ \hline
    $[12,4,6]_2$  & $[16,4,6]_2$  & $[16,4,8]_2$ \\ \hline
    $[14,4,7]_2$  & $[20,4,7]_2$  & $[20,4,10]_2$ \\ \hline
    $[23,5,11]_2$  & $[29,5,11]_2$  & $[29,5,14]_2$ \\ \hline
    $[24,5,12]_2$  & $[32,5,12]_2$  & $[32,5,16]_2$ \\ \hline
    $[27,5,13]_2$  & $[37,5,13]_2$  & $[37,5,18]_2$ \\ \hline
    $[47,6,23]_2$  & $[61,6,23]_2$  & $[61,6,30]_2$ \\ \hline
    $[48,6,24]_2$  & $[64,6,24]_2$  & $[64,6,32]_2$ \\ \hline
    $[55,6,27]_2$  & $[77,6,27]_2$  & $[77,6,38]_2$ \\ \hline
    $[95,7,47]_2$  & $[125,7,47]_2$  & $[125,7,62]_2$ \\ \hline
    $[96,7,48]_2$  & $[128,7,48]_2$  & $[128,7,64]_2$ \\ \hline
    $[107,7,54]_2$  & $[153,7,54]_2$  & $[153,7,76]_2$ \\ \hline
    $[292,8,96]_2$  & $[256,8,96]_2$  & $[256,8,128]_2$ \\ \hline
\end{longtable}	

\subsection{From few-weight codes satisfying the Ashikhmin-Barg condition}
In this subsection, we construct several infinite families of minimal $q$-ary linear codes that violate the Ashikhmin–Barg condition, derived from known few-weight codes that satisfy the  Ashikhmin–Barg condition.\\

Let $m \ge 2$ be an integer. A family of three-weight projective linear $[2^{2m}-1, 3m, 2^{2m-1}-2^{m-1}]_2$ codes is constructed in \cite{Kasami}. The three nonzero weights are $2^{2m-1}-2^{m-1}, 2^{2m-1}$ and $2^{2m-1}+2^{m-1}$. These codes are minimal codes satisfying the Ashikhmin-Barg condition.  From Theorem 2.1, we have the following result.\\

\begin{proposition}\label{P-4.2}
    Let $m \geq 2$ be an integer and $n' = 2^{2m-1} - 2^m - 2^{m-1}$. A family of minimal linear codes with parameters $[2^{2m}-1+n', 3m, 2^{2m-1}-2^{m-1}]_2$ and the maximum weight $2^{2m}-2^{m}$ is constructed from three-weight minimal linear codes with parameters $[2^{2m}-1, 3m, 2^{2m-1}-2^{m-1}]_2$. These codes are minimal codes violating the Ashikhmin-Barg condition.
\end{proposition}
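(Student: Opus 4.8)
The plan is to verify that the Kasami code indeed satisfies the Ashikhmin--Barg condition and then simply apply Theorem 2.1 with the correct numerical values. First I would record the relevant parameters of the three-weight projective $[2^{2m}-1, 3m]_2$ code $\bC$ from \cite{Kasami}: its minimum weight is $w_{min} = 2^{2m-1}-2^{m-1}$ and its maximum weight is $w_{max} = 2^{2m-1}+2^{m-1}$. To see that $\bC$ satisfies the Ashikhmin--Barg condition I would check the inequality $\frac{w_{min}}{w_{max}} = \frac{2^{2m-1}-2^{m-1}}{2^{2m-1}+2^{m-1}} = \frac{2^m-1}{2^m+1} > \frac12$, which holds since $2(2^m-1) = 2^{m+1}-2 > 2^m+1$ for all $m \ge 2$ (indeed for all $m \ge 1$). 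Hence $\bC$ is a minimal binary linear code satisfying the Ashikhmin--Barg condition, so Theorem 2.1 applies.

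Next I would compute the value $n'$ supplied by Theorem 2.1. With $q = 2$ we have $n' = \lceil \frac{q w_{min}}{q-1}\rceil - w_{max} = 2w_{min} - w_{max} = 2(2^{2m-1}-2^{m-1}) - (2^{2m-1}+2^{m-1}) = 2^{2m-1} - 2^m - 2^{m-1}$, matching the statement. Theorem 2.1 then produces a minimal binary linear code $\bC' \subset \mathbb{F}_2^{(2^{2m}-1)+n'}$ of dimension $3m$, with minimum weight unchanged at $w_{min} = 2^{2m-1}-2^{m-1}$ and maximum weight at least $w_{max} + n' = (2^{2m-1}+2^{m-1}) + (2^{2m-1}-2^m-2^{m-1}) = 2^{2m} - 2^m$; one checks that $w_{max}+n' \le 2^{2m-1} + n' + \cdots$ is in fact attained, so the maximum weight is exactly $2^{2m}-2^m$ as claimed (or, if one only wishes to invoke Theorem 2.1 verbatim, the maximum weight is ``at least'' this value, which already suffices for the Ashikhmin--Barg violation). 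Theorem 2.1 guarantees $\frac{w_{min}(\bC')}{w_{max}(\bC')} \le \frac{w_{min}}{w_{max}+n'} = \frac{2^{2m-1}-2^{m-1}}{2^{2m}-2^m} = \frac12 = \frac{q-1}{q}$, so $\bC'$ violates the Ashikhmin--Barg condition while remaining minimal.

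The only genuinely non-routine point is confirming that the maximum weight of $\bC'$ equals exactly $2^{2m}-2^m$ rather than something larger; for this I would note that in the construction of Theorem 2.1 a codeword of $\bC'$ is either supported on the last $2^{2m}-1$ coordinates (weight at most $w_{max} = 2^{2m-1}+2^{m-1} < 2^{2m}-2^m$ for $m \ge 2$) or has the form $\mu_1 {\bf r}_1' + {\bf c}$ with $\mu_1 \ne 0$, whose weight is $n' + \wt(\mu_1 {\bf r}_1 + {\bf c}') \le n' + w_{max} = 2^{2m}-2^m$, with equality exactly when the projection to the last block has maximum weight. Since such codewords exist (take ${\bf r}_1$ itself to be a maximum-weight word of the Kasami code), the maximum weight of $\bC'$ is precisely $2^{2m}-2^m$. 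Everything else is a direct specialization of Theorem 2.1, so no obstacle remains beyond these elementary verifications.
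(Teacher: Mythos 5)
Your proposal is correct and follows essentially the same route as the paper, which simply invokes Theorem 2.1 on the Kasami three-weight code; your arithmetic for $n'$, the Ashikhmin--Barg ratio $\tfrac{2^m-1}{2^m+1}>\tfrac12$, and the new ratio equal to $\tfrac12$ all check out. In fact you supply a detail the paper leaves implicit — that the maximum weight of $\bC'$ is exactly $2^{2m}-2^m$ (via the two-case decomposition of codewords, using that $w_{max}=2^{2m-1}+2^{m-1}<2^{2m}-2^m$ for $m\ge 2$) — which is a welcome tightening of the paper's one-line justification.
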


{\bf Proof.} The maximum weight of the code is known, then the conclusion follows from the construction in Theorem 2.1.\\

\begin{example}{\rm When $m=2$ and $n'=2$, a minimal linear $[17, 6, 6]_2$ code with the maximum weight $12$ is constructed from the three-weight minimal linear $[15, 6, 6]_2$ code with the maximum weight $10$. This code is a minimal code violating the Ashikhmin-Barg condition. Moreover, this code is almost optimal, see \cite{Grassl}.}
\end{example}

The simplex complementary codes of $[2^{2m}-1, 3m, 2^{2m-1}-2^{m-1}]_2$ codes are $[2^{3m+h}-2^{2m},3m+h,2^{3m+h-1}-2^{2m-1}-2^{m-1}]_2$ codes with the maximum weight $2^{3m+h-1}$, see \cite[Theorem 5.2]{ChenXiePan}.  Similarly, from Theorem 2.1, we have the following result.\\

\begin{proposition}\label{P-4.3}
    Let $m \geq 2$ and $h$ be positive integers, $n' = 2^{3m+h-1} - 2^{2m} - 2^{m}$. A family of minimal linear codes with parameters $[2^{3m+h}-2^{2m}+n',3m+h,2^{3m+h-1}-2^{2m-1}-2^{m-1}]_2$ and the maximum weight $2^{3m+h}-2^{2m}-2^{m}$ is constructed from four-weight minimal linear codes with parameters $[2^{3m+h}-2^{2m},3m+h,2^{3m+h-1}-2^{2m-1}-2^{m-1}]_2$. These codes are minimal codes violating the Ashikhmin-Barg condition.
\end{proposition}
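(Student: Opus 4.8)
\textbf{Proof plan for Proposition \ref{P-4.3}.}

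The plan is to combine the structural description of the simplex complementary codes of the Kasami-type codes (already quoted from \cite[Theorem 5.2]{ChenXiePan}) with the transformation of Theorem 2.1, and to verify that the numerical quantity $n'$ produced by Theorem 2.1 is exactly the value stated in the proposition. First I would record the relevant data of the starting code $\bC$: by the cited result, $\bC$ is the simplex complementary $[2^{3m+h}-2^{2m},\,3m+h,\,2^{3m+h-1}-2^{2m-1}-2^{m-1}]_2$ code, whose maximum weight is $w_{max}=2^{3m+h-1}$ and whose minimum weight is $w_{min}=2^{3m+h-1}-2^{2m-1}-2^{m-1}$. Since these are four-weight codes whose weights lie strictly between $w_{min}$ and $w_{max}$, one checks the Ashikhmin-Barg inequality $w_{min}/w_{max}>1/2$, i.e. $2^{3m+h-1}-2^{2m-1}-2^{m-1}>2^{3m+h-2}$, which holds because $2^{3m+h-2}-2^{2m-1}-2^{m-1}>0$ for $m\ge 2$ and $h\ge 1$; hence $\bC$ is a minimal code satisfying the Ashikhmin-Barg condition and Theorem 2.1 applies.

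Next I would compute $n'=\lceil 2w_{min}\rceil-w_{max}=2(2^{3m+h-1}-2^{2m-1}-2^{m-1})-2^{3m+h-1}=2^{3m+h-1}-2^{2m}-2^{m}$, which is precisely the value in the statement (and is a positive integer for $m\ge 2$, $h\ge 1$). Theorem 2.1 then yields a minimal code $\bC'$ of length $(2^{3m+h}-2^{2m})+n'$, dimension unchanged at $3m+h$, and minimum weight unchanged at $w_{min}=2^{3m+h-1}-2^{2m-1}-2^{m-1}$; its maximum weight is $w_{max}+n'=2^{3m+h-1}+2^{3m+h-1}-2^{2m}-2^{m}=2^{3m+h}-2^{2m}-2^{m}$, as claimed. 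Theorem 2.1 also guarantees $w_{min}(\bC')/w_{max}(\bC')\le w_{min}/(w_{max}+n')\le 1/2$, so $\bC'$ violates the Ashikhmin-Barg condition.

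The only genuine content beyond bookkeeping is the input fact that the simplex complementary codes of the Kasami codes have maximum weight exactly $2^{3m+h-1}$ and the stated four weights; this is exactly \cite[Theorem 5.2]{ChenXiePan}, so I would simply invoke it, as was done for Proposition \ref{P-4.2}. I expect no real obstacle: the argument is a direct specialization of Theorem 2.1 once the parameters of $\bC$ are in hand, and the main step to be careful about is the arithmetic identity $n'=2^{3m+h-1}-2^{2m}-2^{m}$ and the fact that it is positive — the latter needing $2^{3m+h-1}>2^{2m}+2^{m}$, which is immediate since $3m+h-1\ge 2m+1>2m$ for $m\ge 2$, $h\ge 1$. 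This completes the plan.
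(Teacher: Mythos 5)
Your proposal is correct and follows essentially the same route as the paper, which derives Proposition \ref{P-4.3} by quoting the parameters and maximum weight $2^{3m+h-1}$ of the simplex complementary codes from \cite[Theorem 5.2]{ChenXiePan}, checking the Ashikhmin--Barg inequality, and then applying Theorem 2.1; your arithmetic for $n'=2w_{min}-w_{max}=2^{3m+h-1}-2^{2m}-2^{m}$ and for the new maximum weight $w_{max}+n'=2^{3m+h}-2^{2m}-2^{m}$ matches the stated parameters exactly. The paper gives only this one-line justification, so your write-up is, if anything, more detailed than the original.
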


\begin{example}
    {\rm When $m=2$, $h=1$ and $n'=44$, a minimal linear $[156, 7, 54]_2$ code with the maximum weight $108$ is constructed from the three-weight minimal linear $[112, 7, 54]_2$ code with the maximum weight $64$. This code is a minimal code violating the Ashikhmin-Barg condition.

    % The corresponding optimal linear code in \cite{Grassl} is $[156, 7, 77]_2$.
    }
\end{example}

Let $\frac{m}{l}$ be odd. A family of three-weight projective linear $[2^{m-1}-1,m, 2^{m-2}-2^{\frac{m+l-4}{2}}]_2$ codes was constructed in \cite[Theorem 1]{DD1}. The three nonzero weights are $2^{m-2}-2^{\frac{m+l-4}{2}}, 2^{m-2}$ and $2^{m-2}+2^{\frac{m+l-4}{2}}$. From Theorem 2.1, we have the following result.\\

\begin{proposition}\label{P-4.4}
    Let $m$ and $l$ be positive integers such that $\frac{m}{l}$ is odd. Set $n' = 2^{m-2}-3\cdot 2^{\frac{m+l-4}{2}}$. A family of $[2^{m-1}-1+n',m, 2^{m-2}-2^{\frac{m+l-4}{2}}]_2$ codes with the maximum weight $2^{m-1}-2^{\frac{m+l-2}{2}}$ is constructed. These codes are minimal codes violating the Ashikhmin-Barg condition.
\end{proposition}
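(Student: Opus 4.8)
The plan is to feed the three-weight code $\bC$ of \cite[Theorem 1]{DD1} into Theorem 2.1 as an input code satisfying the Ashikhmin-Barg condition and to read off the parameters of the resulting code $\bC'$. First I would record the nonzero weights of $\bC$: they are $w_{min}=2^{m-2}-2^{(m+l-4)/2}$, $2^{m-2}$ and $w_{max}=2^{m-2}+2^{(m+l-4)/2}$. Since $m/l$ is odd we may write $m=(2j+1)l$, so $m+l=2(j+1)l$ is even and the exponent $(m+l-4)/2$ is a nonnegative integer; in particular all three weights are well-defined integers. The Ashikhmin-Barg inequality $w_{min}/w_{max}>1/2$ is equivalent to $2^{m-2}>3\cdot 2^{(m+l-4)/2}$, i.e. to positivity of the parameter $n'$ introduced below; this is precisely the (implicit) regime in which the statement is meaningful, and I would state it as a standing hypothesis.

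Next I would compute the length increment of Theorem 2.1 with $q=2$. Since $\bigl\lceil q\,w_{min}/(q-1)\bigr\rceil = 2w_{min}$, we get
$$n'=2w_{min}-w_{max}=2\bigl(2^{m-2}-2^{(m+l-4)/2}\bigr)-\bigl(2^{m-2}+2^{(m+l-4)/2}\bigr)=2^{m-2}-3\cdot 2^{(m+l-4)/2},$$
which matches the value in the statement. Applying Theorem 2.1 to $\bC$, choosing the first row of a generator matrix to be a maximum-weight codeword, produces a minimal binary linear code $\bC'$ of length $(2^{m-1}-1)+n'$, dimension $m$ and minimum weight $w_{min}=2^{m-2}-2^{(m+l-4)/2}$ that violates the Ashikhmin-Barg condition, giving the claimed parameters $[2^{m-1}-1+n',\,m,\,2^{m-2}-2^{(m+l-4)/2}]_2$.

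Finally I would pin down the maximum weight. By Corollary 1.1 the set of nonzero weights of $\bC'$ is $\{w_{min},\,2^{m-2},\,w_{max}\}\cup\{w_{min}+n',\,2^{m-2}+n',\,w_{max}+n'\}$, and since $n'\ge 1$ its largest element is
$$w_{max}+n'=\bigl(2^{m-2}+2^{(m+l-4)/2}\bigr)+\bigl(2^{m-2}-3\cdot 2^{(m+l-4)/2}\bigr)=2^{m-1}-2^{(m+l-2)/2},$$
as asserted. (Equivalently, the bound $w_{max}+n'$ is attained by the first row of the matrix $\bG'$ built in the proof of Theorem 2.1, and no codeword can exceed it since each contributes at most $n'$ from the first $n'$ coordinates and agrees with a codeword of $\bC$ on the last $n$.) The only mildly delicate points are the integrality of the exponents under ``$m/l$ odd'' and carrying the standing assumption $n'\ge 1$ that makes $\bC$ genuinely satisfy the Ashikhmin-Barg condition; apart from this bookkeeping the proposition is a direct instance of Theorem 2.1, so no real obstacle arises.
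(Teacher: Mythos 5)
Your proposal is correct and follows exactly the route the paper intends: feed the three-weight code of \cite[Theorem 1]{DD1} into Theorem 2.1, compute $n'=2w_{min}-w_{max}=2^{m-2}-3\cdot 2^{(m+l-4)/2}$, and read off the maximum weight $w_{max}+n'=2^{m-1}-2^{(m+l-2)/2}$ (the paper states no explicit proof beyond ``From Theorem 2.1''). Your explicit flagging of the standing hypothesis $n'\ge 1$ (equivalently $m-l\ge 4$), which the paper leaves implicit, is a worthwhile addition rather than a deviation.
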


\begin{example}
    {\rm When $(m, l)=(5, 1)$, a minimal linear $[17,5,6]_2$ code with the maximum weight $12$ is constructed from the three-weight minimal linear $[15,5,6]_2$ code with the maximum weight $10$. The corresponding optimal linear code in \cite{Grassl} is a linear $[17,5,8]_2$ code.}
\end{example}

\begin{example}
    {\rm When $(m, l)=(6, 2)$, a minimal linear $[35,6,12]_2$ code with the maximum weight $24$ is constructed from the three-weight minimal linear $[31,6,12]_2$ code with the maximum weight $20$. The corresponding optimal linear code in \cite{Grassl} is a linear $[35,6,16]_2$ code.}
\end{example}

Let $m\geq5$ be an odd positive integer and $h$ be a positive integer. The simplex complementary codes of $[2^{m-2}, m-1, 2^{m-3}-2^{\frac{m-3}{2}}]_2$ codes are $[2^{m+h-1}-2^{m-2}-1, m+h-1, 2^{m+h-2}-2^{m-3}-2^{\frac{m-3}{2}}]_2$ codes with maximum weight $2^{m+h-2}$, see \cite[Theorem 5.7]{ChenXiePan}. From Theorem 2.1, we have the following result.\\

\begin{proposition}\label{P-4.5}
    Let $m\geq5$ be an odd positive integer and $h$ be a positive integer. Set $n'=2^{m+h-2} - 2^{m-2} - 2^{\frac{m-1}{2}}$. A family of $[2^{m+h-1}-2^{m-2}-1+n', m+h-1, 2^{m+h-2}-2^{m-3}-2^{\frac{m-3}{2}}]_2$ codes with the maximum weight $2^{m+h-2}$ is constructed. These codes are minimal codes violating the Ashikhmin-Barg condition.\\
\end{proposition}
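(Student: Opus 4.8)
The plan is to mirror exactly the proof of Proposition~\ref{P-4.4}, since Proposition~\ref{P-4.5} has the same structure: start from a known three-weight projective family satisfying the Ashikhmin-Barg condition, pass to its simplex complementary code (whose parameters and maximum weight are quoted from \cite[Theorem 5.7]{ChenXiePan}), check the Ashikhmin-Barg inequality for that complementary code, and then apply Theorem~2.1 with the stated value of $n'$. First I would record that the simplex complementary code $\bC$ has parameters $[2^{m+h-1}-2^{m-2}-1,\ m+h-1,\ 2^{m+h-2}-2^{m-3}-2^{\frac{m-3}{2}}]_2$ with $w_{max}=2^{m+h-2}$ and $w_{min}=2^{m+h-2}-2^{m-3}-2^{\frac{m-3}{2}}$, so that by Theorem~3.1 (or directly) one verifies $\frac{w_{min}}{w_{max}}=1-\frac{2^{m-3}+2^{\frac{m-3}{2}}}{2^{m+h-2}}>\frac12$ whenever $h\geq 1$; indeed $2^{m-3}+2^{\frac{m-3}{2}}<2^{m-2}\leq 2^{m+h-2}$ already for $h\geq 1$ and $m\geq 5$, so $\bC$ is a minimal code satisfying the Ashikhmin-Barg condition.

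Next I would compute the length extension $n'$ from the formula in Theorem~2.1, $n'=\lceil \frac{2 w_{min}}{1}\rceil - w_{max} = 2w_{min}-w_{max}$ since $q=2$. Substituting, $2(2^{m+h-2}-2^{m-3}-2^{\frac{m-3}{2}}) - 2^{m+h-2} = 2^{m+h-2}-2^{m-2}-2^{\frac{m-1}{2}}$, which is precisely the $n'$ in the statement; here I use $2\cdot 2^{\frac{m-3}{2}}=2^{\frac{m-1}{2}}$, valid because $m$ is odd so $\frac{m-3}{2}$ is an integer. Then Theorem~2.1 produces a minimal code $\bC'$ of length $(2^{m+h-1}-2^{m-2}-1)+n'$, dimension $m+h-1$, minimum weight $w_{min}=2^{m+h-2}-2^{m-3}-2^{\frac{m-3}{2}}$, and maximum weight at least $w_{max}+n'=2^{m+h-2}+n'$; I would note that in fact, as in the earlier propositions, the maximum weight equals $2^{m+h-2}$ exactly — wait, that matches the claim in the proposition statement, so I would simply point out that since the first row of the extended generator matrix has weight $w_{max}+n'=2^{m+h-1}-2^{m-2}-2^{\frac{m-1}{2}}$ while the codewords supported on the last $n$ coordinates still have weights at most $2^{m+h-2}$, the maximum weight of $\bC'$ equals... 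Actually the stated maximum weight $2^{m+h-2}$ would need justification that no codeword exceeds it, which is exactly what Theorem~2.1's lower bound $w_{max}+n'$ contradicts unless one interprets "maximum weight" in the proposition differently; I would re-examine this and likely state the maximum weight as $w_{max}+n' = 2^{m+h-1}-2^{m-2}-2^{\frac{m-1}{2}}$ consistent with Theorem~2.1 and the pattern of the preceding propositions, flagging the discrepancy if the stated $2^{m+h-2}$ cannot be reconciled.

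The main obstacle, then, is not the minimality (that is handed to us verbatim by Theorem~2.1) nor the Ashikhmin-Barg violation (Theorem~2.1 guarantees $\frac{w_{min}}{w_{max}+n'}\leq\frac{q-1}{q}$ automatically), but rather the bookkeeping: confirming that $n'>0$ so the construction is non-vacuous, i.e. $2^{m+h-2}>2^{m-2}+2^{\frac{m-1}{2}}$, which holds for $h\geq 2$ always and needs a small check when $h=1$ (there $n'=2^{m-1}-2^{m-2}-2^{\frac{m-1}{2}}=2^{m-2}-2^{\frac{m-1}{2}}>0$ iff $m-2>\frac{m-1}{2}$ iff $m>3$, true for $m\geq 5$), and making sure the arithmetic with the half-integer exponents is clean, which it is precisely because $m$ is odd. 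I expect the proof to be two or three sentences long in the final text:

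\begin{proof}
    {\rm By \cite[Theorem 5.7]{ChenXiePan}, the simplex complementary code of the $[2^{m-2}, m-1, 2^{m-3}-2^{\frac{m-3}{2}}]_2$ code of \cite{DD1} is a $[2^{m+h-1}-2^{m-2}-1,\ m+h-1,\ 2^{m+h-2}-2^{m-3}-2^{\frac{m-3}{2}}]_2$ code with maximum weight $2^{m+h-2}$. Since $m\geq 5$ and $h\geq 1$, we have $2^{m-3}+2^{\frac{m-3}{2}}<2^{m-2}\leq 2^{m+h-2}$, hence
    $$\frac{2^{m+h-2}-2^{m-3}-2^{\frac{m-3}{2}}}{2^{m+h-2}}>\frac12,$$
    so this code is a minimal code satisfying the Ashikhmin-Barg condition. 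Applying Theorem 2.1 with $w_{max}=2^{m+h-2}$ and $w_{min}=2^{m+h-2}-2^{m-3}-2^{\frac{m-3}{2}}$ gives $n'=2w_{min}-w_{max}=2^{m+h-2}-2^{m-2}-2^{\frac{m-1}{2}}$, using $m$ odd, and $n'>0$ because $m>3$. The resulting code has the stated parameters and maximum weight, and by Theorem 2.1 it is a minimal code violating the Ashikhmin-Barg condition. The conclusion is proved.
    }
\end{proof}
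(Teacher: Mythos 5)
Your proposal is correct and follows essentially the same route as the paper, which gives no separate proof for this proposition but derives it exactly as you do: quote the simplex complementary code's parameters and maximum weight from \cite[Theorem 5.7]{ChenXiePan}, check the Ashikhmin-Barg ratio, and apply Theorem 2.1 with $n'=2w_{min}-w_{max}$. You are also right to flag the stated maximum weight $2^{m+h-2}$ as an error: Theorem 2.1 and the paper's own example $(m,h)=(5,1)$ (maximum weight $20$, not $16$) confirm it should be $w_{max}+n'=2^{m+h-1}-2^{m-2}-2^{\frac{m-1}{2}}$, exactly as in the pattern of Proposition 4.4.
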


\begin{example}
    {\rm When $(m,h)=(5,1)$, a minimal linear $[27,5,10]_2$ code with the maximum weight $20$ is constructed from the three-weight minimal linear $[23,5,10]_2$ code with maximum weight $16$. The optimal linear $[27,5,13]_2$ code was documented in \cite{Grassl}.}
\end{example}

Let $h$ be a positive integer. The simplex complementary code of the linear $[2^{2m-3}+2^{m-2}-1,2m-2,2^{2m-4}]_2$ code is a linear $[2^{2m+h-2}-2^{2m-3}-2^{m-2},2m+h-2,2^{2m+h-3}-2^{2m-4}-2^{m-2}]_2$ code with the maximum weight $2^{2m+h-3}$, see \cite[Theorem 5.8]{ChenXiePan}. From Theorem 2.1, we have the following result.\\

\begin{proposition}\label{P-4.7}
    Let $m\geq3$ be a positive integer, $h$ be a positive integer. Set $n'=2^{2m+t-3} - 2^{2m-3} - 2^{m-1}$. A family of $[2^{2m+h-2}-2^{2m-3}-2^{m-2}+n',2m+h-2,2^{2m+h-3}-2^{2m-4}-2^{m-2}]_2$ codes with the maximum weight $2^{2m+h-2}-2^{2m-3}-2^{m-1}$ is constructed. These codes are minimal codes violating the Ashikhmin-Barg condition.
\end{proposition}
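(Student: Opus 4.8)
The plan is to apply Theorem 2.1 directly to the simplex complementary code $\bC$ with parameters $[2^{2m+h-2}-2^{2m-3}-2^{m-2},\ 2m+h-2,\ 2^{2m+h-3}-2^{2m-4}-2^{m-2}]_2$ and maximum weight $w_{max}=2^{2m+h-3}$, which is quoted from \cite[Theorem 5.8]{ChenXiePan}. First I would record that $w_{min}=2^{2m+h-3}-2^{2m-4}-2^{m-2}$ and check that the Ashikhmin-Barg condition $\frac{w_{min}}{w_{max}}>\frac{1}{2}$ holds for $q=2$; this amounts to verifying $2^{2m+h-3}-2^{2m-4}-2^{m-2} > \frac{1}{2}\cdot 2^{2m+h-3} = 2^{2m+h-4}$, i.e. $2^{2m+h-4}>2^{2m-4}+2^{m-2}$, which is true for all $h\geq 1$ and $m\geq 3$ (for $h=1$ it reads $2^{2m-3}>2^{2m-4}+2^{m-2}$, clear since $2^{2m-4}>2^{m-2}$ when $m\geq 3$). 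So $\bC$ is a minimal code satisfying the Ashikhmin-Barg condition, and Theorem 2.1 applies.

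Next I would compute the length extension $n'$ prescribed by Theorem 2.1, namely $n'=\lceil \frac{2 w_{min}}{1}\rceil - w_{max} = 2w_{min}-w_{max} = 2(2^{2m+h-3}-2^{2m-4}-2^{m-2}) - 2^{2m+h-3} = 2^{2m+h-3}-2^{2m-3}-2^{m-1}$, which matches the stated $n'=2^{2m+t-3}-2^{2m-3}-2^{m-1}$ (with $t=h$; the $t$ in the statement is evidently a typo for $h$). Then Theorem 2.1 produces a minimal code $\bC'$ in $\F_2^{n+n'}$ where $n=2^{2m+h-2}-2^{2m-3}-2^{m-2}$, so the length of $\bC'$ is $n+n' = 2^{2m+h-2}-2^{2m-3}-2^{m-2}+n'$, the dimension stays $2m+h-2$, and the minimum weight stays $w_{min}=2^{2m+h-3}-2^{2m-4}-2^{m-2}$, exactly as claimed. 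The new maximum weight is at least $w_{max}+n' = 2^{2m+h-3} + (2^{2m+h-3}-2^{2m-3}-2^{m-1}) = 2^{2m+h-2}-2^{2m-3}-2^{m-1}$, matching the stated maximum weight. Finally, Theorem 2.1 guarantees $\frac{w_{min}(\bC')}{w_{max}(\bC')}\leq \frac{1}{2}$, so $\bC'$ violates the Ashikhmin-Barg condition.

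The only genuinely new content beyond invoking Theorem 2.1 is the arithmetic: confirming the AB inequality for the base code, and confirming that the closed-form expressions for $n'$, the length, and the maximum weight in the statement are what Theorem 2.1 outputs. I expect the main (very minor) obstacle to be bookkeeping — keeping the exponents straight across the substitution $K=2m+h-2$ and making sure the maximum-weight claim is stated as an equality only when justified. Here Theorem 2.1 only gives $w_{max}(\bC')\geq w_{max}+n'$; to assert equality one should note that the first row of $\bG'$ has weight exactly $w_{max}+n'$ and, since every nonzero codeword of the simplex complementary code has weight at most $q^{K-1}=2^{2m+h-3}=w_{max}$, no codeword of $\bC'$ can exceed $w_{max}+n'$ (its projection onto the last $n$ coordinates has weight $\leq w_{max}$ and onto the first $n'$ coordinates weight $\leq n'$). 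This completes the proof; I would present it as a short paragraph mirroring the proofs of Propositions 4.2–4.5.
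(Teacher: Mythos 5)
Your proposal is correct and follows exactly the paper's (implicit) argument: quote the simplex complementary code's parameters and maximum weight $2^{2m+h-3}$ from \cite[Theorem 5.8]{ChenXiePan}, verify the Ashikhmin--Barg condition, and feed it into Theorem 2.1, with the arithmetic for $n'$, the length, and the new maximum weight all checking out (and $t$ indeed being a typo for $h$). Your added observation that the maximum weight equals $w_{max}+n'$ exactly --- since any codeword of $\bC'$ projects to a weight-$\le n'$ prefix and a codeword of $\bC$ of weight $\le w_{max}$ --- is a small but welcome tightening of what Theorem 2.1 alone guarantees.
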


\begin{example}
    {\rm When $m=3$, a minimal linear $[26,5,10]_2$ code with the maximum weight $20$ is constructed from the two-weight minimal linear $[22,5,10]_2$ code with the maximum weight $16$. The optimal linear $[26,5,12]_2$ code was documented in \cite{Grassl}.}
\end{example}

Let $m$ be an odd positive integer. Two families of projective linear $[n_1, m, \frac{n_1-2^{(m-1)/2}}{2}]_2$ code with three nonzero weights, $\frac{n_1-2^{(m-1)/2}}{2}$, $\frac{n_1}{2}$ and $\frac{n_1+2^{(m-1)/2}}{2}$, were constructed in \cite[Corollary 11]{Ding}, where $n_1=2^{m-1}$ or $n_1=2^{m-1}+2^{(m-1)/2}$. For $m=5$ and $n_1=2^{m-1}$, the linear $[16,5,6]_2$ code has the maximum weight $10$. From Theorem 3.1, a minimal linear $[18,5,6]_2$ code with the maximum weight $12$ is obtained. For $m=5$ and $n_1=2^{m-1}+2^{(m-1)/2}$, the binary $[20,5,8]_2$ code has the maximum weight $12$. From Theorem 3.1, a minimal linear $[24,5,8]_2$ code with the maximum weight $16$ is obtained. The optimal weights of a linear $[18, 5]_2$ code is $8$ and the optimal minimum weight of  a linear $[24,5]_2$ code is $12$.\\

\begin{proposition}\label{P-4.8}
    Let $m$ be an odd positive integer, $n_1=2^{m-1}$ or $n_1=2^{m-1}+2^{(m-1)/2}$. A family of $[\frac{3}{2}\cdot(n_1-2^{\frac{m-1}{2}}), m, \frac{n_1-2^{(m-1)/2}}{2}]_2$ codes with the maximum weight $n_1-2^{\frac{m-1}{2}}$ is constructed. These codes are minimal codes violating the Ashikhmin-Barg condition.
\end{proposition}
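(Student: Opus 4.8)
The plan is to obtain this family by applying Theorem~2.1 with $q=2$ to the three-weight projective binary codes of \cite[Corollary 11]{Ding}. Write $a=2^{(m-1)/2}$, which is an integer and in fact even since $m$ is odd and $m\ge 3$. The starting code $\bC$ has parameters $[n_1,m]_2$ with nonzero weights $w_{min}=\frac{n_1-a}{2}$, $\frac{n_1}{2}$, and $w_{max}=\frac{n_1+a}{2}$; only the two extreme weights will matter. First I would check that $\bC$ satisfies the Ashikhmin-Barg condition, i.e. $\frac{w_{min}}{w_{max}}=\frac{n_1-a}{n_1+a}>\frac12$, which is equivalent to $n_1>3a$. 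For $n_1=2^{m-1}$ this reads $2^{(m-1)/2}>3$ and for $n_1=2^{m-1}+2^{(m-1)/2}$ it reads $2^{(m-1)/2}>2$; in both cases it holds once $m\ge 5$, which is the effective range of the proposition.

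Next I would compute the length increment prescribed by Theorem~2.1. Since $w_{min}$ is an integer, $n'=\lceil 2w_{min}\rceil-w_{max}=2w_{min}-w_{max}=(n_1-a)-\frac{n_1+a}{2}=\frac{n_1-3a}{2}$, a positive integer in the range $m\ge 5$. Theorem~2.1 then produces an explicit minimal binary code $\bC'$ of length $n_1+n'=\frac{3(n_1-a)}{2}$, dimension $m$, and minimum weight $w_{min}=\frac{n_1-a}{2}$, and it already delivers minimality together with the violation $\frac{w_{min}}{w_{max}(\bC')}\le\frac12$. So the only genuinely remaining task is to pin down the maximum weight of $\bC'$ exactly as $n_1-a$, rather than merely $\ge w_{max}+n'=n_1-a$.

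For this I would reuse the structural description of codewords of $\bC'$ from the proof of Theorem~2.1. Every nonzero codeword of $\bC'$ is of one of two types: either it is a combination of the rows $\mathbf{r}_2',\dots,\mathbf{r}_m'$, hence supported on the last $n_1$ coordinates and equal to a codeword of a subcode of $\bC$, so of weight at most $w_{max}=\frac{n_1+a}{2}$; or it equals $\mathbf{r}_1'$ plus such a combination, hence is the all-ones vector on the $n'$ prepended coordinates together with a nonzero codeword of $\bC$ on the last $n_1$ coordinates, so of weight $n'+w$ with $w\in\{w_{min},\tfrac{n_1}{2},w_{max}\}$. Among $w_{max}$, $n'+w_{min}$, $n'+\tfrac{n_1}{2}$, $n'+w_{max}$, the largest is $n'+w_{max}=n_1-a$: indeed $n'+w_{max}>n'+\tfrac{n_1}{2}>n'+w_{min}$, and $n'+w_{max}=n_1-a>w_{max}$ is again exactly the inequality $n_1>3a$. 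Since the first row $\mathbf{r}_1'$ attains weight $n_1-a$, I would conclude $w_{max}(\bC')=n_1-a$, and the code violates the Ashikhmin-Barg condition since $\frac{w_{min}}{w_{max}(\bC')}=\frac12$.

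The only non-mechanical step is this last determination of the maximum weight; everything else is arithmetic substitution into Theorem~2.1. The mild subtlety is that one must invoke the Ashikhmin-Barg inequality $n_1>3a$ of the original code $\bC$ twice — once to see $w_{max}<n_1-a$ so that the prepended-first-row codeword really is the unique maximal-weight type, and once to ensure $n'>0$ — but both are automatic for $m\ge 5$, and the two cases $n_1=2^{m-1}$ and $n_1=2^{m-1}+2^{(m-1)/2}$ are handled uniformly in terms of $n_1$ and $a$.
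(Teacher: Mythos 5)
Your proposal is correct and follows exactly the route the paper intends: apply Theorem~2.1 to the three-weight codes of \cite[Corollary 11]{Ding}, with $n'=\frac{n_1-3\cdot 2^{(m-1)/2}}{2}$ and new length $\frac{3}{2}(n_1-2^{(m-1)/2})$. The paper in fact states Proposition~4.8 without any proof, so your additional steps — checking the Ashikhmin--Barg condition $n_1>3\cdot 2^{(m-1)/2}$ (which forces $m\ge 5$, a hypothesis the proposition omits) and pinning down the maximum weight of $\bC'$ as exactly $n_1-2^{(m-1)/2}$ rather than only the lower bound guaranteed by Theorem~2.1 — supply details the paper leaves implicit.
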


The simplex complementary code of the linear $[n_1, m, \frac{n_1-2^{(m-1)/2}}{2}]_2$ code is a linear $[2^{m+t}-n_1-1, m+t, 2^{m+t-1}-\frac{n_1+2^{(m-1)/2}}{2}]_2$ code. For example, from projective liner $[11,4,5]_2$, $[25,5,12]_2$ and $[27,5,13]_2$ codes, minimal linear $[13,4,5]_2$, $[32,5,12]_2$ and $[37,5,13]_2$ codes violating the Ashikhmin-Barg condition are constructed. The corresponding optimal weights are $6$, $16$ and $18$, see \cite{Grassl}.\\

Similarly, we have the following result.\\

\begin{proposition}\label{P-4.9}
    Let $m$ be an odd positive integer, $n_1=2^{m-1}$ or $n_1=2^{m-1}+2^{(m-1)/2}$. A family of $[3\cdot2^{m+t-1}-2n_1-2^{\frac{m-1}{2}}-1, m+t, 2^{m+t-1}-\frac{n_1+2^{(m-1)/2}}{2}]_2$ codes with the maximum weight $2^{m+t}-n_1-2^{\frac{m-1}{2}}$ is constructed. These codes are minimal codes violating the Ashikhmin-Barg condition.
\end{proposition}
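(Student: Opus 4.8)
The plan is to follow the pattern of the proof of Proposition~\ref{P-4.8}, composing the simplex complementary construction of Section~3 with the transformation of Theorem~2.1 and keeping track of the parameters. First I would recall that the Ding codes in question are projective binary $[n_1,m,\frac{n_1-2^{(m-1)/2}}{2}]_2$ codes with nonzero weights $\frac{n_1-2^{(m-1)/2}}{2}$, $\frac{n_1}{2}$ and $\frac{n_1+2^{(m-1)/2}}{2}$, so their maximum weight is $w=\frac{n_1+2^{(m-1)/2}}{2}$. Since $n_1\le 2^{m-1}+2^{(m-1)/2}$, one checks $w<2^{m-1}$ for $m\ge 5$, hence $\log_2 w<m-1$, so that the inequality $t>\log_2 w-m+2$ required by Theorem~3.1 holds for every positive integer $t$ (only for the very smallest admissible $m$ does one need $t$ slightly larger).

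Next I would form the simplex complementary code of the Ding code with $K=m+t$. By the construction recalled in Section~3 --- this is precisely the code exhibited in the paragraph preceding the proposition --- it has parameters $[2^{m+t}-n_1-1,\ m+t,\ 2^{m+t-1}-w]_2$, and its maximum weight is $q^{K-1}=2^{m+t-1}$. By Theorem~3.1 this code is a minimal code satisfying the Ashikhmin-Barg condition, so I may feed it into Theorem~2.1 with $q=2$, $w_{\min}=2^{m+t-1}-w$ and $w_{\max}=2^{m+t-1}$. Since $w_{\min}$ is an integer, $n'=\lceil 2w_{\min}\rceil-w_{\max}=2w_{\min}-w_{\max}=2^{m+t-1}-n_1-2^{(m-1)/2}$, and Theorem~2.1 yields a minimal binary code $\mathbf{C}'$ violating the Ashikhmin-Barg condition, of dimension $m+t$, of length $(2^{m+t}-n_1-1)+n'=3\cdot 2^{m+t-1}-2n_1-2^{(m-1)/2}-1$, and of minimum weight $w_{\min}=2^{m+t-1}-\frac{n_1+2^{(m-1)/2}}{2}$. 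To pin the maximum weight down to exactly $w_{\max}+n'=2^{m+t}-n_1-2^{(m-1)/2}$ rather than merely a lower bound, I would repeat the weight analysis in the proof of Theorem~2.1: a codeword of $\mathbf{C}'$ either lies in the subcode spanned by $\mathbf{r}_2',\dots,\mathbf{r}_k'$, and then has weight at most $w_{\max}$, or it has the form $\mu_1\mathbf{r}_1'+\mathbf{c}$ with $\mu_1\ne 0$, and then has weight $n'+\mathrm{wt}(\mu_1\mathbf{r}_1+\mathbf{c}')\le n'+w_{\max}$; as $n'\ge 1$, the maximum is $w_{\max}+n'$.

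All the arithmetic here is routine substitution; the only points calling for a little care are the edge case of the Theorem~3.1 hypothesis for the smallest admissible $m$ (where $t\ge 2$ rather than $t\ge1$ may be needed) and the check that the simplex complementary code really has dimension $m+t$, which holds because its $2^{m+t}-n_1-1$ columns outnumber $2^{m+t-1}-1$ and hence cannot lie in a hyperplane of $\mathbf{F}_2^{m+t}$.
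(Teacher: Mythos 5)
Your proposal is correct and follows exactly the route the paper intends (the paper itself gives no explicit proof beyond "Similarly," relying on substitution into Theorems 3.1 and 2.1): form the simplex complementary code of the Ding code, check the Ashikhmin--Barg condition, and apply Theorem 2.1 with $n'=2w_{\min}-w_{\max}$; your arithmetic reproduces the stated length, dimension, minimum weight and maximum weight. Your added remarks on the exactness of the maximum weight and on the small-$m$ edge case for the hypothesis of Theorem 3.1 are careful refinements that the paper omits.
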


Let $m\geq 3$ be an odd integer and $q$ be an odd prime. A family of three-weight codes with parameters $[q^{m-1}, m, q^{m-1}-q^{m-2}-q^{\frac{m-3}{2}}]_q$ was constructed in \cite[Corollary 3]{DD}. Their simplex complementary codes are $[q^{m-1}, m, q^{m-1}-q^{m-2}-q^{\frac{m-3}{2}}]_q$ codes with weights $q^{m-1}-q^{m-2}+q^{\frac{m-3}{2}}$, $q^{m-1}-q^{m-2}$ and $q^{m-1}-q^{m-2}-q^{\frac{m-3}{2}}$. When $q=3,5$ and $m=3$, a minimal linear $[9,3,5]_3$ code with the maximum weight $7$ and a minimal linear $[23,5,19]_5$ code with the maximum weight $21$ are obtained. From Corollary 3.1, an almost optimal minimal $[10,3,5]_3$ code violating the Ashikhmin-Barg condition and the minimal linear $[28,3,19]_5$ code violating the Ashikhmin-Barg condition are constructed. The optimal minimum weight of a linear $[28,3]_5$ code is $22$, see \cite{Grassl}.

\begin{proposition}\label{P-4.10}
    Let $m\geq 3$ be an odd integer and $q$ be an odd prime. Set $$n'=\left \lceil \frac{q}{q-1} \cdot (q^{m-1}-q^{m-2}-q^{\frac{m-3}{2}}) \right \rceil - q^{m-1}+q^{m-2}-q^{\frac{m-3}{2}}. $$ A family of $[q^{m-1}+n', m, q^{m-1}-q^{m-2}-q^{\frac{m-3}{2}}]_q$ codes with the maximum weight $q^{m-1}-q^{m-2}+q^{\frac{m-3}{2}}+n'$ is constructed. These codes are minimal codes violating the Ashikhmin-Barg condition.
\end{proposition}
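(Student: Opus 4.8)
The plan is to obtain this family as a direct application of Theorem 2.1 to the three‑weight $[q^{m-1},m,w_{min}]_q$ code $\bC$ recalled just before the statement, writing $w_{min}=q^{m-1}-q^{m-2}-q^{\frac{m-3}{2}}$ and $w_{max}=q^{m-1}-q^{m-2}+q^{\frac{m-3}{2}}$, so that the nonzero weights of $\bC$ are $w_{min}$, $q^{m-1}-q^{m-2}$ and $w_{max}$. The only genuine preliminary is to check that $\bC$ satisfies the Ashikhmin-Barg condition, i.e. $q\,w_{min}>(q-1)w_{max}$. Expanding both sides and cancelling common terms, this is equivalent to $q^{m-2}(q-1)>q^{\frac{m-3}{2}}(2q-1)$; dividing by the positive integer $q^{\frac{m-3}{2}}$ (here $m\geq 3$ is odd, so $\tfrac{m-3}{2}\geq 0$) it becomes $q^{\frac{m-1}{2}}(q-1)>2q-1$. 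This is evident for $m\geq 5$, where the left side is at least $q^{2}(q-1)$, and for $m=3$ it reads $q^{2}-3q+1>0$, which holds for every odd prime $q\geq 3$ (its value at $q=3$ is $1$). Hence $\bC$ is a minimal code satisfying the Ashikhmin-Barg condition and Theorem 2.1 applies.

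Next I would observe that the integer $n'$ displayed in the statement coincides with the one produced by Theorem 2.1: since $-q^{m-1}+q^{m-2}-q^{\frac{m-3}{2}}=-w_{max}$, we have $n'=\lceil\frac{q\,w_{min}}{q-1}\rceil-w_{max}$, and the strict Ashikhmin-Barg inequality forces $\lceil\frac{q\,w_{min}}{q-1}\rceil\geq w_{max}+1$, so $n'\geq 1$. Theorem 2.1 then yields an explicit minimal linear code $\bC'\subset\F_q^{\,q^{m-1}+n'}$ of dimension $m$, with minimum weight $w_{min}=q^{m-1}-q^{m-2}-q^{\frac{m-3}{2}}$ and maximum weight at least $w_{max}+n'$, and with $\frac{w_{min}}{w_{max}+n'}\leq\frac{q-1}{q}$; thus $\bC'$ is minimal and violates the Ashikhmin-Barg condition, which is the bulk of the assertion.

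It remains to pin the maximum weight down to exactly $w_{max}+n'=q^{m-1}-q^{m-2}+q^{\frac{m-3}{2}}+n'$, since Theorem 2.1 only gives ``at least''. For this I would revisit the generator matrix $\mathbf{G}'$ from the proof of Theorem 2.1 and split an arbitrary codeword of $\bC'$ according to the coefficient $\mu_1$ of its first row: if $\mu_1=0$ the codeword is supported on the last $q^{m-1}$ coordinates and is a codeword of a subcode of $\bC$, hence has weight at most $w_{max}$; if $\mu_1\neq 0$, its first $n'$ coordinates equal $\mu_1(c_1',\dots,c_{n'}')$ with all $c_j'\neq 0$ and so contribute exactly $n'$, while its last $q^{m-1}$ coordinates form a codeword of $\bC$ of weight at most $w_{max}$, so the total weight is at most $n'+w_{max}$, with equality attained by the first row of $\mathbf{G}'$ itself. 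This gives $w_{max}(\bC')=q^{m-1}-q^{m-2}+q^{\frac{m-3}{2}}+n'$ and completes the proof. There is no deep obstruction here; the only two points deserving care are verifying the Ashikhmin-Barg inequality uniformly over all odd $m\geq 3$ and all odd primes $q$ (the tight instance being $(m,q)=(3,3)$) and upgrading Theorem 2.1's lower bound on the maximum weight to the stated equality via the case analysis above.
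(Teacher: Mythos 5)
Your proposal is correct and follows essentially the same route as the paper, which obtains this family by applying Theorem 2.1 to the three-weight $[q^{m-1},m]_q$ codes recalled just before the statement. You additionally spell out the verification of the Ashikhmin--Barg inequality (with the tight case $(m,q)=(3,3)$) and the upgrade of the maximum weight from ``at least $w_{max}+n'$'' to equality via the $\mu_1=0$ versus $\mu_1\neq 0$ split; both checks are accurate and are only implicit in the paper.
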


\subsection{From linear codes violating the Ashikhmin-Barg condition}
In this subsection, we construct minimal $q$-ary linear codes that violate the Ashikhmin–Barg condition, based on the construction given in Theorem 3.1.

Let $\bC_0$ be a linear $[n, n-1, 2]_2$ code with maximum weight $n$ if $n$ is even, and $n-1$ if $n$ is odd. This code violates the Ashikhmin-Barg condition, is not a minimal code. From Theorem 3.1 and 2.1, we can transform it into a minimal code violating the Ashikhmin-Barg condition.

\begin{proposition}\label{P-4.11}  Let $h$ be a non-negative integer.
When
$$
h > \begin{cases}
\log_2(n - 1) - n + 3, & \text{if } n \text{ is odd}, \\
\log_2 n - n + 3, & \text{if } n \text{ is even},
\end{cases}
$$
the simplex complementary code $\bC_1$ of $\bC_0$ has parameters
$$
\begin{cases}
\left[2^{n + h - 1} - n - 1,\; n + h - 1,\; 2^{n + h - 2} - n + 1\right]_2, & \text{if } n \text{ is odd}, \\
\left[2^{n + h - 1} - n - 1,\; n + h - 1,\; 2^{n + h - 2} - n\right]_2, & \text{if } n \text{ is even},
\end{cases}
$$
and the maximum weight $2^{n + h - 2}$. From Theorem 2.1, the minimal linear codes $\bC'$ violating the Ashikhmin-Barg condition is constructed with parameters
$$
\begin{cases}
\left[3\cdot2^{n + h - 2} - 3n + 1,\; n + h - 1,\; 2^{n + h - 2} - n + 1\right]_2, & \text{if } n \text{ is odd}, \\
\left[3\cdot2^{n + h - 2} - 3n - 1,\; n + h - 1,\; 2^{n + h - 2} - n\right]_2, & \text{if } n \text{ is even},
\end{cases}
$$
The maximum codeword weight is $2^{n + h - 1} - 2n + 2$ if $n$ is odd and $2^{n + h - 1} - 2n $ if $n$ is even.\\
\end{proposition}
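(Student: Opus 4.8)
The plan is to simply feed the code $\bC_0$ through the two machines already in hand: first Theorem 3.1 to form its simplex complementary code $\bC_1$, then Theorem 2.1 to push $\bC_1$ across the Ashikhmin-Barg threshold. The only genuine content is bookkeeping: tracking the parameters, checking the hypothesis of Theorem 3.1 is met, and computing the integer $n'$ that Theorem 2.1 prescribes. First I would recall that $\bC_0$ is a projective $[n,n-1,2]_2$ code (the even-weight code, or its length-$n$ analogue for odd $n$), with $w = n$ when $n$ is even and $w = n-1$ when $n$ is odd. Apply the simplex complementary construction of Theorem 3.1 with $k = n-1$ and the parameter there called $h$: the resulting code $\bC_1$ has length $\frac{2^{(n-1)+h}-1}{2-1} - n = 2^{n+h-1}-1-n$, dimension $n+h-1$, minimum weight $2^{n+h-2} - w$, and maximum weight $2^{n+h-2}$. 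Substituting the two values of $w$ gives exactly the two displayed parameter sets for $\bC_1$. The Ashikhmin-Barg condition for $\bC_1$ reads $\frac{2^{n+h-2}-w}{2^{n+h-2}} > \frac12$, i.e. $2^{n+h-2} > 2w$, i.e. $h > \log_2 w - n + 3$; substituting $w = n-1$ or $w = n$ yields precisely the case distinction in the hypothesis of the proposition, so under that hypothesis $\bC_1$ satisfies the Ashikhmin-Barg condition and Theorem 2.1 applies to it.

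Next I would invoke Theorem 2.1 on $\bC_1$ with $w_{\max} = 2^{n+h-2}$ and $w_{\min} = 2^{n+h-2} - w$. The recipe gives
\[
n' = \left\lceil \frac{2 w_{\min}}{1} \right\rceil - w_{\max} = 2(2^{n+h-2} - w) - 2^{n+h-2} = 2^{n+h-2} - 2w.
\]
Then $\bC'$ has length (length of $\bC_1$) $+\, n' = (2^{n+h-1}-1-n) + (2^{n+h-2} - 2w) = 3\cdot 2^{n+h-2} - 1 - n - 2w$, dimension $n+h-1$, and minimum weight unchanged at $2^{n+h-2} - w$. Plugging $w = n-1$ gives length $3\cdot 2^{n+h-2} - 1 - n - 2(n-1) = 3\cdot 2^{n+h-2} - 3n + 1$ and minimum weight $2^{n+h-2} - n + 1$; plugging $w = n$ gives length $3\cdot 2^{n+h-2} - 3n - 1$ and minimum weight $2^{n+h-2} - n$. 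These match the two displayed parameter sets for $\bC'$. By Theorem 2.1 the maximum weight of $\bC'$ is $w_{\max} + n' = 2^{n+h-2} + (2^{n+h-2} - 2w) = 2^{n+h-1} - 2w$, which is $2^{n+h-1} - 2n + 2$ when $n$ is odd and $2^{n+h-1} - 2n$ when $n$ is even, as claimed; and Theorem 2.1 guarantees $\bC'$ is minimal and violates the Ashikhmin-Barg condition.

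The one point that deserves a sentence of care is the projectivity requirement in Theorem 3.1 — the simplex complementary construction needs the columns of a generator matrix of $\bC_0$ to be $n$ distinct points of the relevant projective space, and it needs $n < q^{k-1} = 2^{n-2}$. The distinctness is automatic for the $[n,n-1,2]_2$ code (its dual is the repetition code, so the parity-check matrix is the all-ones row and a generator matrix has pairwise distinct nonzero columns), and $n < 2^{n-2}$ holds for all $n \ge 5$, which is the only regime in which the hypothesis on $h$ can be satisfied with $h$ a non-negative integer anyway; for the handful of small $n$ one can append zero coordinates as in the more general construction described before Theorem 3.1, or simply note the statement is vacuous. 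I do not anticipate any obstacle beyond this routine verification — the substance of the argument lives entirely in Theorems 2.1 and 3.1, and the proposition is an instantiation.
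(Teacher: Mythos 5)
Your proposal is correct and follows exactly the route the paper takes (the paper's own proof is the single sentence that the parameters can be calculated from Theorems 3.1 and 2.1; you have supplied that calculation, and it checks out, including the identification $w=n-1$ or $w=n$ and the value $n'=2^{n+h-2}-2w$). The only hair worth splitting is that Theorem 2.1 guarantees the maximum weight of $\bC'$ is only \emph{at least} $w_{max}+n'$; the exact value you and the proposition assert follows from the structure made explicit in Theorem 6.1, since every codeword of $\bC'$ has weight either $w_i$ or $w_i+n'$ for some nonzero weight $w_i$ of $\bC_1$.
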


{\bf Proof.} These parameters can be calculated from constructions in Theorem 3.1 and 2.1.

\begin{example}
    {\rm  When $n=4$, let $\bC_0$ be the linear $[4,3,2]_2$ code. The simplex complementary code $\bC_1$ of $\bC_0$ is the $[27,5,12]_2$ code with the maximum weight $16$. From Theorem 2.1, $\bC_1$ can be transformed into the $[35,5,12]_2$ code with the maximum weight $24$. It is a minimal code that violates the Ashikhmin-Barg condition. The corresponding optimal linear codes in \cite{Grassl} is a linear $[35,5,16]_2$ code. Similarly, a minimal linear $[34,5,12]_2$ code and a minimal linear $[31,5,10]_2$ code violating the Ashikhmin-Barg condition are obtained from the linear $[5,4,2]_2$ code and the linear $[6,5,2]_2$ code.}
\end{example}

Table \ref{tab:4-nAB} lists these minimal codes constructed from Proposition \ref{P-4.11}, where $\bC_0$ denotes the original code, $\bC_1$ denotes the simplex complementary code of $\bC_0$ and $\bC'$ denotes the minimal codes violating the Ashikhmin-Barg condition constructed from $\bC_1$.\\

\begin{longtable}{|l|l|l|l|l|}

\caption{\label{tab:4-nAB} Minimal codes violating the AB condition constructed in Proposition \ref{P-4.11}}\\ \hline
$\bC_0$ & $\bC_1$ & $\bC'$ & Optimal parameters   \\ \hline
$[4,3,2]_2$ &  $[27,5,12]_2$ & $[35,5,12]$ & $[35,5,16]_2$ \\ \hline
$[5,4,2]_2$ &  $[26,5,12]_2$ & $[34,5,12]$ & $[34,5,16]_2$ \\ \hline
$[6,5,2]_2$ &  $[25,5,10]_2$ & $[31,5,10]$ & $[31,5,16]_2$ \\ \hline

\end{longtable}

\subsection{From arbitrary projective linear codes}

In this subsection, we construct minimal linear codes that violate the Ashikhmin-Barg condition from arbitrary projective linear codes.\\

Given any best known $q$-ary projective linear code $\bC_0$ from \cite{Grassl}, we can calculate its weight distribution with Magma. If $\bC_0$ satisfies the Ashikhmin-Barg condition, Theorem 2.1 can be directly applied to construct a minimal code $\bC'$ violating the Ashikhmin-Barg condition. Otherwise, Theorem 3.1 is used to construct a minimal code $\bC_1$ satisfying the Ashikhmin-Barg condition, from the simplex complement code of the code $\bC_0$. Then Theorem 2.1 is applied to obtain a minimal code $\bC'$ violating the Ashikhmin-Barg condition. In particular, when the original code is optimal and exhibits a significant gap between its minimum and maximum weights, the resulting minimal code not only violates the Ashikhamin-Barg condition, but also has its minimum weight close to the best known weight of the corresponding linear code in \cite{Grassl}.

\begin{example}
{\rm
(1) The binary $[43,9,18]_2$ code $\bC_0$ with the maximum weight $32$ in \cite{Grassl} satisfies the Ashikhmin-Barg condition. The resulting code $\bC'$ is a minimal linear $[47,9,18]_2$ code with the maximum weight $36$. The minimum weight $18$ is close to the best known weight $20$ of the linear $[47,9,20]_2$ code in \cite{Grassl}.\\

(2)The ternary $[182,15,99]_3$ code $\bC_0$ with the maximum weight $147$ in \cite{Grassl} satisfies the Ashikhmin-Barg condition. The resulting $[184,15,99]_3$ code $\bC'$  has the maximum weight $149$. This code is optimal, see \cite{Grassl}.

(3) The ternary $[208,14,117]_3$ code $\bC_0$ with the maximum weight $168$ in \cite{Grassl} satisfies the Ashikhmin-Barg condition. The resulting $[216,14,117]_3$ code $\bC'$ has the maximum weight $176$. The minimum weight $117$ is close to the minimum weight $120$ of the best known linear $[216,14,120]_3$ code, see \cite{Grassl}.

(4) The quasi-cyclic $[62,6,45]_5$ code $\bC_0$ with maximum weight $55$ in \cite{Grassl} satisfies the Ashikhmin-Barg condition. The resulting $[64,6,45]_5$ code $\bC'$ has the maximum weight $57$. This code is best known, see \cite{Grassl}.

(5) The ternary $[7,3,4]_3$ code $\bC_0$ in \cite{Grassl} violates the Ashikhmin-Barg condition. The simplex complementary code $\bC_1$ of $\bC_0$ is a linear $[33,4,20]_3$ code and has the maximum weight $27$. This code satisfies the Ashikhmin-Barg condition. The resulting $[36,4,20]_3$ code $\bC'$ has the maximum weight $30$. The minimum weight $20$  is close to the minimum weight of the optimal $[36,4,24]_3$ code in \cite{Grassl}.
}

\end{example}

Further examples obtained from the best known codes in the codetable \cite{Grassl} are presented in Table \ref{tab:from-codetable} in Appendix A.\\

\section{Infinitely many families of self-orthogonal binary minimal linear codes violating the Ashikhmin-Barg condition}

A binary linear code is said to be doubly even if the Hamming weight of every nonzero codeword is divisible by four. Such codes are necessarily self-orthogonal with respect to the standard inner product. Notice that the simplex complementary code of a doubly even binary code is doubly even again. In this section, we present infinitely many infinite families of doubly even minimal linear codes that violate the Ashikhmin–Barg condition.\\

Starting from a known doubly even binary cyclic code $\mathbf{C}$, we apply Corollary 3.1 to construct a doubly even binary minimal code $\mathbf{C}'$ that violates the Ashikhmin–Barg condition. Both $\mathbf{C}$ and $\mathbf{C}'$ are listed in Table \ref{tab:5-cyclic}.

\begin{longtable}{|l|l|l|l|}

\caption{\label{tab:5-cyclic} Binary self-orthogonal minimal codes violating the AB condition constructed from binary self-orthogonal cyclic codes}\\ \hline
$\bf C$       & $\bf C'$ & Weights of $\bf C'$ & Best known parameters   \\ \hline
$[31,10,12]_2$ &  $[35,10,12]_2$ & $16,20,24$ & $[35,10,16]_2$ \\ \hline
$[63,12,24]_2$ &  $[71,12,24]_2$ & $32,40,48$ & $[71,12,29]_2$ \\ \hline
$[63,15,24]_2$ &  $[71,15,24]_2$ & $32,36,40,44,48$ & $[71,15,26]_2$ \\ \hline
$[85,16,32]_2$ &  $[93,16,32]_2$ & $40,44,48,52,56,60,64$ & $[93,16,35]_2$ \\ \hline
$[93,15,32]_2$ & $[101,15,32]_2$ & $40,48,56,64$ & $[101,15,40]_2$ \\ \hline
$[127,21,48]_2$ & $[143,21,48]_2$ & $56,64,72,80,88$ & $[143,21,54]_2$ \\ \hline
$[127,28,44]_2$ & $[131,28,44]_2$ & $48,52,56,60,64,68,72,76,80,84,88$ & $[131,28,44]_2$ \\ \hline
$[195,24,72]_2$ & $[211,24,72]_2$ & $88,96,104,112,120,128,136,144$ & $[211,24,80]_2$ \\ \hline

\end{longtable}

\begin{lemma}\citep[Theorem 4.3]{SH}\label{L-5.1}
    Let $k \ge 4$ and $u_1, u_2, \dots, u_t$ be integers satisfying $k > u_t > \cdots > u_1 \geq 3$, the binary Solomon-Stiffler $[2^k - 1 - \sum_{i=1}^t (2^{u_i} - 1),\ k,\ 2^{k-1} - \sum_{i=1}^t 2^{u_i - 1}]_2$ code is self-orthogonal.
\end{lemma}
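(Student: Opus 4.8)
\textbf{Proof proposal for Lemma \ref{L-5.1}.}

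The plan is to use the characterization of self-orthogonality in terms of the generator matrix: a binary linear code $\bC$ with generator matrix ${\bf G}$ is self-orthogonal if and only if ${\bf G}{\bf G}^{\mathsf T}={\bf 0}$ over $\F_2$, i.e.\ every row of ${\bf G}$ has even weight and every pair of distinct rows of ${\bf G}$ has even intersection. Since the Solomon--Stiffler code is obtained by deleting the columns of $\mathcal F=\bigcup_{i=1}^t (S_i\setminus\{{\bf 0}\})$ from the generator matrix of the binary simplex $[2^k-1,k,2^{k-1}]_2$ code, its generator matrix ${\bf G}$ has columns indexed by $\F_2^k\setminus(\{{\bf 0}\}\cup\mathcal F)$. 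First I would recall that for the full simplex code, the inner product of the rows corresponding to ${\bf a},{\bf b}\in\F_2^k$, computed over all nonzero columns ${\bf x}$, equals $\sum_{{\bf x}\ne{\bf 0}}\langle{\bf a},{\bf x}\rangle\langle{\bf b},{\bf x}\rangle$, which is a standard count: it equals $2^{k-2}$ when ${\bf a}\ne{\bf b}$ are both nonzero, and $2^{k-1}$ when ${\bf a}={\bf b}\ne{\bf 0}$ (these are just the weight/intersection numbers of simplex codewords). Both are even for $k\ge 3$, so the simplex code is self-orthogonal; the issue is whether deleting $\mathcal F$ preserves parity.

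The key step is therefore to compute, for each subspace $S_i$ of dimension $u_i$, the contribution $\sum_{{\bf x}\in S_i\setminus\{{\bf 0}\}}\langle{\bf a},{\bf x}\rangle\langle{\bf b},{\bf x}\rangle$ and show it is even. Restricting the linear functionals ${\bf a},{\bf b}$ to $S_i$, this is again a simplex-type sum but now over the $u_i$-dimensional space $S_i$: it equals $2^{u_i-1}$ or $2^{u_i-2}$ (depending on whether the two restricted functionals coincide or not, and whether they are zero on $S_i$), or $0$ in the degenerate cases. The hypothesis $u_1\ge 3$ guarantees $u_i\ge 3$ for all $i$, hence $2^{u_i-2}$ and $2^{u_i-1}$ are even. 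Then the inner product of two rows of ${\bf G}$ is $\bigl(\text{simplex value}\bigr)-\sum_{i=1}^t\bigl(\text{contribution of }S_i\bigr)$, a difference of even numbers, hence even (note the subspaces $S_i$ are pairwise disjoint apart from ${\bf 0}$, so there is no inclusion--exclusion overlap to worry about). This shows ${\bf G}{\bf G}^{\mathsf T}={\bf 0}$, i.e.\ $\bC$ is self-orthogonal.

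The main obstacle, and the place requiring care, is the bookkeeping of the degenerate cases in the character-sum evaluation: when ${\bf a}|_{S_i}$ or ${\bf b}|_{S_i}$ is the zero functional, or when ${\bf a}|_{S_i}={\bf b}|_{S_i}$, the sum $\sum_{{\bf x}\in S_i\setminus\{{\bf 0}\}}\langle{\bf a},{\bf x}\rangle\langle{\bf b},{\bf x}\rangle$ takes the values $0$, $2^{u_i-1}$, or $2^{u_i-2}$ rather than a single uniform value, and one must check each branch is even under $u_i\ge 3$. (In fact $2^{u_i-2}$ is the binding constraint, forcing $u_i\ge 3$ rather than merely $u_i\ge 2$, which explains the hypothesis.) Once these cases are enumerated the argument is routine; alternatively, one can phrase the whole computation intrinsically by noting that the rows of ${\bf G}$ are the restrictions to $\F_2^k\setminus(\{{\bf 0}\}\cup\mathcal F)$ of the coordinate functionals, and that self-orthogonality of a simplex code together with self-orthogonality of each "sub-simplex" on $S_i$ (valid since $u_i\ge 3$) gives the claim by linearity of the defining bilinear form over the disjoint index sets. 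This matches \citep[Theorem 4.3]{SH}.
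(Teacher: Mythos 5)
Your argument is correct. Note that the paper itself offers no proof of this lemma: it is quoted verbatim from \citep[Theorem 4.3]{SH}, so there is no in-paper argument to compare against. Your character-sum computation is a valid self-contained derivation: the inner product of the codewords attached to $\mathbf{a},\mathbf{b}\in\mathbb{F}_2^k$ over the surviving columns is $\bigl(2^{k-1}\text{ or }2^{k-2}\bigr)-\sum_{i=1}^t c_i$ with each $c_i\in\{0,2^{u_i-1},2^{u_i-2}\}$, and you correctly enumerate the degenerate branches (restrictions to $S_i$ zero, equal, or independent) and identify $2^{u_i-2}$ as the term that forces $u_i\ge 3$; the pairwise disjointness of the $S_i$ indeed removes any inclusion--exclusion issue, and orthogonality of the generators plus even row weights extends to the whole code by bilinearity. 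One remark: under the same hypotheses a slightly shorter route is available. Every codeword weight equals $2^{k-1}-\sum_{i\in I}2^{u_i-1}$ for some index set $I$, and with $k\ge 4$ and $u_i\ge 3$ each term is divisible by $4$, so the code is doubly even and hence self-orthogonal by the standard identity $2\,|{\rm supp}(\mathbf{u})\cap{\rm supp}(\mathbf{v})|=wt(\mathbf{u})+wt(\mathbf{v})-wt(\mathbf{u}+\mathbf{v})$; this is the viewpoint the paper itself adopts at the start of Section 5, and it avoids the two-row case analysis entirely.
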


When $k \ge 5$ and $k-1 > u_t > \cdots > u_1 \geq 3$, the binary Solomon-Stiffler code is minimal code satisfying the Ashikhmin-Barg condition. Based on Theorem 2.1, we obtain the following result.\\

\begin{proposition}\label{P-5.0}
    Let $k \ge 5$ and $u_1, u_2, \dots, u_t$ be integers satisfying $k-1 > u_t > \cdots > u_1 \geq 3$ and $\sum_{i=1}^{t}u_i <k$. The binary Solomon-Stiffler codes $[2^k - 1 - \sum_{i=1}^t (2^{u_i} - 1),\ k,\ 2^{k-1} - \sum_{i=1}^t 2^{u_i - 1}]_2$ code is a minimal self-orthogonal code satisfying the Ashikhmin-Barg condition. A family of binary minimal codes with parameters $$[2^k +2^{k-1} + t - 1 - \sum_{i=1}^t 2^{u_i+1},\ k,\ 2^{k-1} - \sum_{i=1}^t 2^{u_i - 1}]_2$$ are constructed. These self-orthogonal minimal binary codes violate the Ashikhmin-Barg condition.
\end{proposition}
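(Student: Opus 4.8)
The plan is to combine three facts already established in the paper: (i) Lemma~\ref{L-5.1}, which guarantees that the binary Solomon-Stiffler $[2^k - 1 - \sum_{i=1}^t (2^{u_i} - 1),\ k,\ 2^{k-1} - \sum_{i=1}^t 2^{u_i - 1}]_2$ code is self-orthogonal whenever $k \ge 4$ and $k > u_t > \cdots > u_1 \geq 3$; (ii) Proposition~\ref{P-4.1}, which shows that under the stronger hypotheses $k-1 > u_t > \cdots > u_1 \geq 1$ and $\sum u_i < k$ this code is a minimal code satisfying the Ashikhmin-Barg condition, with maximum weight $2^{k-1}$; and (iii) Corollary~\ref{P-4.1}... rather Corollary~2.1, which asserts that a self-orthogonal code satisfying the Ashikhmin-Barg condition can be transformed into a self-orthogonal minimal code violating that condition. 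So the proof is essentially a matter of checking that the hypotheses of the proposition are compatible with those of the lemma and then invoking the machinery.

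First I would verify the arithmetic of the Ashikhmin-Barg ratio: since $\sum_{i=1}^t u_i < k$ forces $u_t \le k-1$ (indeed $u_t < k$), and in fact the hypothesis $k-1 > u_t$ is imposed directly, the simplex-complement deletion removes only columns lying in proper subspaces of dimension $\le k-2$, so no codeword can have weight exceeding $2^{k-1}$; hence $w_{max} = 2^{k-1}$. The minimum weight is $2^{k-1} - \sum_{i=1}^t 2^{u_i-1}$, and because each $u_i \le k-2$ we get $\sum 2^{u_i-1} \le 2^{k-3} + 2^{k-4} + \cdots < 2^{k-2}$, so $w_{min}/w_{max} > 1/2$, i.e.\ the Ashikhmin-Barg condition holds. (More precisely the disjointness of the $S_i$ and the bound $\sum u_i < k$ make the sum $\sum 2^{u_i-1}$ strictly less than $2^{k-2}$; this is exactly the inequality already used in the proof of Proposition~\ref{P-4.1}.) Next I would observe that the hypothesis $u_1 \ge 3$ is precisely what Lemma~\ref{L-5.1} needs for self-orthogonality, and that $k \ge 5$ together with $k-1 > u_t$ is consistent with $u_1 \ge 3$ (one needs $k-1 > u_t > \cdots > u_1 \ge 3$ to have at least one admissible $u_i$, which requires $k \ge 5$). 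Thus the Solomon-Stiffler code in question is simultaneously self-orthogonal and a minimal code satisfying the Ashikhmin-Barg condition.

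Then I would apply Corollary~2.1 (equivalently Theorem~2.1 with the extra self-orthogonality bookkeeping): the construction appends $n'$ nonzero coordinates to the maximum-weight row and, since the original code is self-orthogonal, one extra coordinate position can be chosen so that all the new inner products vanish, yielding a self-orthogonal minimal code $\bC'$ of length $n + n' + 1$ that violates the Ashikhmin-Barg condition. Here $n = 2^k - 1 - \sum_{i=1}^t(2^{u_i}-1) = 2^k - 1 - \sum 2^{u_i} + t$ and $n' = \lceil 2 w_{min} \rceil - w_{max} = 2\bigl(2^{k-1} - \sum 2^{u_i-1}\bigr) - 2^{k-1} = 2^{k-1} - \sum 2^{u_i}$ (the ceiling is vacuous over $\F_2$), so $n + n' + 1 = 2^k - 1 - \sum 2^{u_i} + t + 2^{k-1} - \sum 2^{u_i} + 1 = 2^k + 2^{k-1} + t - \sum_{i=1}^t 2^{u_i+1}$, matching the claimed length. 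The dimension and minimum weight are preserved by Theorem~2.1, completing the count. I expect no genuine obstacle: the only subtlety is the explicit length computation — correctly tracking the $t$ and the two copies of $\sum 2^{u_i}$ — and confirming that the single extra coordinate of Corollary~2.1 suffices to restore self-orthogonality given that the binary Solomon-Stiffler code (and its constituent pieces) are doubly even, so all pairwise and self inner products can be forced to zero with one free position; this is the one place where I would write out the check carefully, using that over $\F_2$ the appended all-ones block of length $n'$ contributes $n' \bmod 2$ to the self-inner-product and that $n' = 2^{k-1} - \sum 2^{u_i}$ is even, so actually the extra coordinate handles cross terms with the other rows rather than the diagonal.
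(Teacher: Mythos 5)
Your overall route --- Lemma~\ref{L-5.1} for self-orthogonality, the Ashikhmin--Barg verification from Proposition~\ref{P-4.1} (with $w_{max}=2^{k-1}$ and $\sum_{i=1}^t 2^{u_i-1}<2^{k-2}$), and then the extension construction of Section~2 --- is the same as the paper's. However, your final length computation contains a genuine error. You invoke Corollary~2.1, which outputs a code of length $n+n'+1$, and you claim that $n+n'+1 = 2^k+2^{k-1}+t-\sum_{i=1}^t 2^{u_i+1}$ ``matches the claimed length.'' It does not: the proposition asserts length $2^k+2^{k-1}+t-1-\sum_{i=1}^t 2^{u_i+1}$, which equals $n+n'$ (with $n=2^k-1-\sum 2^{u_i}+t$ and $n'=2^{k-1}-\sum 2^{u_i}$), i.e.\ one \emph{less} than what you computed. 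The paper's own worked example confirms this: from the $[24,5,12]_2$ Solomon--Stiffler code with $n'=8$ one gets a $[32,5,12]_2$ code, not a $[33,5,12]_2$ code. So the construction intended here is that of Theorem~2.1 with \emph{no} appended extra coordinate, and your proof as written produces a code with the wrong parameters.

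The missing idea that closes the gap is a divisibility observation you half-noticed but did not carry through. Under the hypotheses $k\ge 5$ and $u_i\ge 3$, the quantity $n'=2^{k-1}-\sum_{i=1}^t 2^{u_i}$ is divisible by $8$ (each term is divisible by $8$), and the Solomon--Stiffler code is doubly even. Every nonzero codeword of the Theorem~2.1 code has weight either $w$ or $w+n'$ for some nonzero weight $w$ of $\bC$, so all weights remain $\equiv 0 \pmod 4$; the length-$(n+n')$ code is therefore already doubly even, hence self-orthogonal, and Corollary~2.1's extra coordinate is unnecessary. Your closing remark that ``$n'$ is even'' only controls the self-inner-product modulo $2$; the relevant fact is $n'\equiv 0 \pmod 4$, which via the standard implication ``doubly even $\Rightarrow$ self-orthogonal'' handles the cross terms as well, so no coordinate needs to be reserved for them. (The paper's one-line proof also cites Corollary~2.1, but its stated parameters and example make clear that the intended argument is the doubly-even one just described.)
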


{\bf Proof.} The simplex complementary codes of binary doubly even codes are doubly even, then the conclusion follows from Corollary 2.1.

\begin{example}{\rm When $k=5$, $t=1$, $u_1=3$, the corresponding Solomon-Stiffler code is a linear $[24,5,12]_2$ code with the weight enumerator $1+28z^{12}+3z^{16}$. Then a self-orthogonal binary minimal $[32,5,12]_2$ code with the weight enumerator $1+14z^{12}+z^{16}+14z^{20}+2z^{24}$ is constructed. It violates the Ashikhmin-Barg condition, since $\frac{w_{min}}{w_{max}}=\frac{1}{2}$. The optimal minimum weight of a $[32,5]_2$ code is $16$, see \cite{Grassl}.}
\end{example}

For $m\ge3$, the binary simplex $[2^m-1,m, 2^{m-1}]_2$ code is a minimal self-orthogonal code because it satisfies $\frac{2^m-1}{2^m-1}>\frac{1}{2}$ and its only weight $2^{m-1}$ is divisible by four.\\

\begin{proposition}\label{P-5.1}
    When $m\ge3$, a family of binary minimal self-orthogonal linear codes violating the Ashikhmin-Barg condition with parameters $[3\cdot2^{m-1}-1,m,2^{m-1}]_2$ and the maximum weight $2^m$ is constructed.
\end{proposition}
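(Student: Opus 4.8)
The plan is to apply Corollary 3.1 (equivalently Theorem 3.1 followed by Theorem 2.1) to the binary simplex $[2^m-1, m, 2^{m-1}]_2$ code, which was just noted to be a minimal self-orthogonal code satisfying the Ashikhmin-Barg condition. First I would record the relevant data of the simplex code: it has $n = 2^m-1$, dimension $k = m$, and a single nonzero weight, so $w_{min} = w_{max} = 2^{m-1}$. The simplex code is projective, so Theorem 3.1 applies directly with $h = 0$ (no enlargement of the dimension is needed, since the Ashikhmin-Barg condition already holds); indeed the simplex complementary construction with $h=0$ returns the simplex code itself. Thus the input to Theorem 2.1 is just the simplex code with $w_{max} = w_{min} = 2^{m-1}$.

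Next I would compute the parameter $n'$ from Theorem 2.1. Since $q = 2$, we have
$$
n' = \left\lceil \frac{2 w_{min}}{1} \right\rceil - w_{max} = 2 \cdot 2^{m-1} - 2^{m-1} = 2^{m-1}.
$$
Theorem 2.1 then produces a minimal code $\mathbf{C}'$ of length $n + n' = (2^m - 1) + 2^{m-1} = 3 \cdot 2^{m-1} - 1$, dimension $m$, minimum weight $w_{min} = 2^{m-1}$, and maximum weight at least $w_{max} + n' = 2^{m-1} + 2^{m-1} = 2^m$. Since the first row of the enlarged generator matrix $\mathbf{G}'$ has all $n'$ prepended coordinates nonzero (and the simplex code has maximum weight $2^{m-1}$ realized by its first row), the maximum weight is exactly $2^m$. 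The ratio $w_{min}/w_{max} = 2^{m-1}/2^m = 1/2 = (q-1)/q$, so $\mathbf{C}'$ violates the Ashikhmin-Barg condition, as guaranteed by Theorem 2.1.

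Finally I would verify self-orthogonality via Corollary 2.1. The simplex code is doubly even for $m \ge 3$ (its unique nonzero weight $2^{m-1}$ is divisible by $4$), and Corollary 2.1 tells us that by imposing a suitable linear condition on the $n'$ prepended coordinates we obtain a self-orthogonal code; the resulting length in Corollary 2.1 is $n + n' + 1$. Here I should reconcile the length: the statement of Proposition 5.3 gives length $3 \cdot 2^{m-1} - 1$, matching $n + n'$ rather than $n + n' + 1$, so I would instead argue directly that one can choose the nonzero entries $c_1', \dots, c_{n'}'$ (here all equal to $1$, since $q=2$) so that the first row of $\mathbf{G}'$ has weight $2^m \equiv 0 \pmod 4$ and is orthogonal to itself and to the remaining rows; orthogonality among $\mathbf{r}_2', \dots, \mathbf{r}_m'$ and of each with $\mathbf{r}_1'$ reduces to orthogonality inside the simplex code on the last $n$ coordinates, which holds because the simplex code is self-orthogonal, while $\langle \mathbf{r}_1', \mathbf{r}_1' \rangle = \wt(\mathbf{r}_1') = 2^m \equiv 0$. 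Hence $\mathbf{C}'$ is self-orthogonal with the claimed parameters $[3 \cdot 2^{m-1} - 1, m, 2^{m-1}]_2$ and maximum weight $2^m$. The only delicate point is this length bookkeeping — confirming that no extra coordinate beyond the $n'$ already added is required for self-orthogonality in the binary doubly even case — and I would make that explicit rather than invoking Corollary 2.1 as a black box.
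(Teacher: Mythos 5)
Your proposal is correct and matches the paper's (implicit) argument: apply Theorem 2.1 directly to the binary simplex code with $n'=2^{m-1}$, observe that every nonzero weight of $\mathbf{C}'$ is $2^{m-1}$ or $2^m$ so the ratio equals $1/2$, and check self-orthogonality on the generator rows rather than invoking Corollary 2.1 (whose length $n+n'+1$ would not match); your explicit resolution of that length discrepancy is exactly the right bookkeeping. The only blemish is the parenthetical claim that the simplex complementary construction with $h=0$ "returns the simplex code itself" (it would delete all columns), but you do not actually use it, since Theorem 2.1 is applied to the simplex code directly.
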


\begin{example}
    {\rm  When $m=3$, the simplex binary $[7,3,4]_2$ code is a self-orthogonal minimal code satisfying the Ashikhmin-Barg condition $\frac{4}{4}>\frac{1}{2}.$ We construct an explicit self-orthogonal minimal $[11,3,4]_2$ code with the maximum weight $8$. This binary self-orthogonal minimal code violates the Ashikhmin-Barg condition, since $\frac{4}{8}\leq \frac{1}{2}.$ The optimal minimum weight of a binary linear $[11,3]_2$ code is $6$.}
\end{example}

Let $m\ge 3$ be a positive integer. A family of binary three-weight linear codes with parameters $[2^{2m-2}+2^{m-1}-1, 2m-1, 2^{2m-3}]_2$ and nonzero weights $2^{2m-3}$, $2^{2m-3}+2^{m-2}$ and $2^{2m-3}+2^{m-1}$ was constructed in \cite[Theorem 3]{Hu2021}. When $m=3$, the code has parameters $[19,5,8]_2$ with the maximum weight $12$. From Theorem 3.1, we construct a linear $[23,5,8]_2$ code with the maximum weight $16$. This is a minimal code violating the Ashikhmin-Barg condition. The corresponding optimal minimum weight is $11$, see \cite{Grassl}.\\

For $m \ge 4$, all three nonzero weights are divisible by $4$, these binary codes are self-orthogonal. From Theorem 2.1, we have the following result.\\

\begin{proposition}\label{P-5.2}
    Let $m \ge 3$ be a positive integer. A family of $[3\cdot2^{2m-3}-1, 2m-1, 2^{2m-3}]_2$ codes with the maximum weight $2^{2m-2}$ is constructed. These codes are minimal codes violating the Ashikhmin-Barg condition. Moreover,  when $m \ge 4$, they are self-orthogonal.
\end{proposition}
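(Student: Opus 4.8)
The plan is to apply the machinery of Theorem 2.1 directly to the three-weight codes of \cite[Theorem 3]{Hu2021}, so the whole argument reduces to (i) checking the Ashikhmin--Barg condition for the source codes, (ii) reading off the parameters of the output codes from the construction in Theorem 2.1, and (iii) verifying the divisibility-by-four statement. First I would record the relevant data: the source code $\bC_m$ is a binary $[2^{2m-2}+2^{m-1}-1,\,2m-1,\,2^{2m-3}]_2$ code with nonzero weights $w_{min}=2^{2m-3}$, the intermediate weight $2^{2m-3}+2^{m-2}$, and $w_{max}=2^{2m-3}+2^{m-1}$. Since this code is projective (it is one of the standard two-/three-weight codes arising from a quadratic/trace construction) its maximum weight is bounded by its length, so Theorem 2.1 applies.

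Next I would compute $n'=\lceil \frac{2\,w_{min}}{2-1}\rceil - w_{max} = 2\cdot 2^{2m-3} - (2^{2m-3}+2^{m-1}) = 2^{2m-3}-2^{m-1}$. Plugging into Theorem 2.1: the output code $\bC'$ has length $(2^{2m-2}+2^{m-1}-1)+(2^{2m-3}-2^{m-1}) = 2^{2m-2}+2^{2m-3}-1 = 3\cdot 2^{2m-3}-1$, dimension $2m-1$, and minimum weight $w_{min}=2^{2m-3}$; its maximum weight is at least $w_{max}+n' = (2^{2m-3}+2^{m-1})+(2^{2m-3}-2^{m-1}) = 2^{2m-2}$. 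In fact one checks from the explicit generator matrix ${\bf G}'$ in the proof of Theorem 2.1 that the first row has weight exactly $w_{max}+n' = 2^{2m-2}$ and no codeword can exceed this (the last $n$ coordinates contribute at most $w_{max}$ and the first $n'$ contribute at most $n'$), so the maximum weight equals $2^{2m-2}$. Minimality and violation of the Ashikhmin--Barg condition are then immediate from Theorem 2.1, since $\frac{w_{min}}{w_{max}} = \frac{2^{2m-3}}{2^{2m-2}} = \frac12 \le \frac{q-1}{q}$ with $q=2$.

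For the self-orthogonality claim when $m\ge 4$: here I would invoke Corollary 2.1 rather than Theorem 2.1, choosing the $n'$ appended coordinates (which for $q=2$ must all be $1$) together with one extra coordinate so that the resulting code is still self-orthogonal. The point is that for $m\ge 4$ all three nonzero weights $2^{2m-3},\,2^{2m-3}+2^{m-2},\,2^{2m-3}+2^{m-1}$ are divisible by $4$ (each of $2m-3,\,m-2,\,m-1$ is $\ge 2$ when $m\ge 4$), so $\bC_m$ is doubly even and hence self-orthogonal; then the construction preserves self-orthogonality by Corollary 2.1. One should note the mild length bookkeeping discrepancy — Corollary 2.1 produces length $n+n'+1$ — so I would phrase the self-orthogonal family with the appropriate length or simply verify directly that the weight-$2^{2m-2}$ codewords and all others stay doubly even after the extension, which is the cleanest route.

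The main obstacle is not any single hard step but rather the consistency of the self-orthogonality claim with the stated length $3\cdot 2^{2m-3}-1$: Theorem 2.1 alone does not guarantee self-orthogonality, and Corollary 2.1 nominally adds one more coordinate. I expect this to be resolved by observing that when the source code is doubly even, the particular all-ones extension used in Theorem 2.1 already yields a doubly even (hence self-orthogonal) code provided the weight $w_{max}+n' = 2^{2m-2}$ and the added length $n'=2^{2m-3}-2^{m-1}$ interact correctly modulo $4$ — one checks $n' \equiv 0 \pmod 4$ for $m\ge 4$ and that the first row's weight $2^{2m-2}\equiv 0\pmod 4$, and that orthogonality of the first row with the others follows from self-orthogonality of $\bC_m$ together with the all-ones block being orthogonal to the zero blocks. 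The remaining verifications are routine arithmetic with powers of two.
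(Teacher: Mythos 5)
Your proposal is correct and follows essentially the same route as the paper: apply Theorem 2.1 to the three-weight $[2^{2m-2}+2^{m-1}-1,2m-1,2^{2m-3}]_2$ codes of \cite[Theorem 3]{Hu2021} with $n'=2^{2m-3}-2^{m-1}$, and for $m\ge 4$ observe that all weights (and $n'$) are divisible by $4$, so the resulting code is doubly even and hence self-orthogonal. Your direct verification of double-evenness after the extension is in fact the cleaner way to settle the self-orthogonality claim at length $n+n'$, since the paper's appeal to Corollary 2.1 would nominally add one extra coordinate.
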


\begin{example}
    {\rm When $m=4$, from the binary self-orthogonal $[71,7,32]_2$ code with weights $32$, $36$ and $40$, we construct a linear $[95,7,32]_2$ code with weights $32$, $36$, $40$, $56$, $60$ and $64$. This is a binary self-orthogonal minimal code violating the Ashikhamin-Barg condition. The corresponding optimal $[95,7,47]_2$ was documented in \cite{Grassl}. }
\end{example}

Let $m\geq3$ be a positive integer. A family of two-weight binary projective linear $[2^{2m-3}+2^{m-2}-1,2m-2,2^{2m-4}]_2$ codes with nonzero weights $2^{2m-4}$ and $2^{2m-4}+2^{m-2}$ was constructed in \cite[Theorem 5.2, Corollary 5.4]{WZZ}. When $m=3$, the code has parameters $[9,4,4]_2$ the maximum weight $6$. Then a minimal $[11,4,4]_2$ code violating the Ashikhmin-Barg condition is constructed.\\

When $m \ge 4$, both weights are divisible by $4$. Then these codes are self-orthogonal. From Theorem 2.1, we have the following result.\\

\begin{proposition}\label{P-5.4}
    Let $m\geq 3$ be a positive integer. A family of $[3 \cdot 2^{2m-4}-1,2m-2,2^{2m-4}]_2$ codes with the maximum weight $2^{2m-3}$ is constructed. These codes are minimal codes violating the Ashikhmin-Barg condition. Moreover, when $m \ge 4$, they are self-orthogonal.
\end{proposition}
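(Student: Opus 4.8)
The plan is to verify that the hypotheses of Theorem 2.1 (and, for the self-orthogonality claim, Corollary 2.1) are met by the two-weight projective code $\bC_0$ with parameters $[2^{2m-3}+2^{m-2}-1,\,2m-2,\,2^{2m-4}]_2$ and nonzero weights $w_{min}=2^{2m-4}$ and $w_{max}=2^{2m-4}+2^{m-2}$, which was constructed in \cite[Theorem 5.2, Corollary 5.4]{WZZ}. First I would check the Ashikhmin--Barg condition for $\bC_0$: we need $\frac{w_{min}}{w_{max}}=\frac{2^{2m-4}}{2^{2m-4}+2^{m-2}}=\frac{2^{m-2}}{2^{m-2}+1}>\frac12$, which holds for all $m\ge 3$ since $2^{m-2}\ge 2>1$. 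Hence $\bC_0$ is a binary minimal code satisfying the Ashikhmin--Barg condition, and Theorem 2.1 applies.

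Next I would compute the length extension parameter $n'=\lceil \frac{q w_{min}}{q-1}\rceil - w_{max}$ with $q=2$. Here $\frac{2\,w_{min}}{1}=2\cdot 2^{2m-4}=2^{2m-3}$, which is already an integer, so $n'=2^{2m-3}-(2^{2m-4}+2^{m-2})=2^{2m-4}-2^{m-2}$. The new length is $(2^{2m-3}+2^{m-2}-1)+(2^{2m-4}-2^{m-2})=2^{2m-3}+2^{2m-4}-1=3\cdot 2^{2m-4}-1$, matching the stated length, and the dimension is unchanged at $2m-2$. By Theorem 2.1 the minimum weight of $\bC'$ is still $w_{min}=2^{2m-4}$ and the maximum weight is at least $w_{max}+n'=(2^{2m-4}+2^{m-2})+(2^{2m-4}-2^{m-2})=2^{2m-3}$, as claimed; since the new first-row codeword has weight exactly $w_{max}+n'=2^{2m-3}$ and all other codewords have weight at most $w_{max}+n'$ by the block structure of $\bG'$, the maximum weight equals $2^{2m-3}$. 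Theorem 2.1 then guarantees $\bC'$ is a minimal code violating the Ashikhmin--Barg condition, since $\frac{w_{min}}{w_{max}+n'}=\frac{2^{2m-4}}{2^{2m-3}}=\frac12\le\frac12$.

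For the self-orthogonality claim when $m\ge 4$, I would observe that both nonzero weights of $\bC_0$, namely $2^{2m-4}$ and $2^{2m-4}+2^{m-2}$, are divisible by $4$ precisely when $2m-4\ge 2$ and $m-2\ge 2$, i.e.\ when $m\ge 4$; thus $\bC_0$ is doubly even and hence self-orthogonal for $m\ge 4$. Then I would invoke Corollary 2.1, choosing the $n'$ appended nonzero coordinates (all equal to $1$ in the binary case) together with one extra coordinate so that the augmented first row has even weight and remains orthogonal to itself and to the other rows, which are supported on the original doubly-even code. The only mild subtlety — and the main thing to get right — is bookkeeping: Corollary 2.1 produces a code of length $n+n'+1$, so one must confirm that $3\cdot 2^{2m-4}-1$ already accounts for the extra coordinate (or, alternatively, state the self-orthogonal family with the understanding that the construction of Corollary 2.1 is used); concretely one checks that appending the all-ones block of length $n'$ plus a single parity coordinate keeps every codeword weight divisible by $4$, using that $n'=2^{2m-4}-2^{m-2}$ is divisible by $4$ for $m\ge 4$. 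With that verification the proposition follows.
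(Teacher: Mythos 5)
Your proposal is correct and follows the same route the paper (implicitly) takes: check the Ashikhmin--Barg condition for the WZZ two-weight $[2^{2m-3}+2^{m-2}-1,2m-2,2^{2m-4}]_2$ code, compute $n'=2^{2m-4}-2^{m-2}$, and apply Theorem 2.1, with all the arithmetic verifying the stated length, minimum weight, and maximum weight $2^{2m-3}$. On the one subtlety you flag, no extra coordinate from Corollary 2.1 is actually needed: since $n'\equiv 0 \pmod 4$ for $m\ge 4$ and both nonzero weights of the input code are divisible by $4$, every codeword of $\bC'$ has weight $w_i$ or $w_i+n'$, all divisible by $4$, so $\bC'$ of length $n+n'=3\cdot 2^{2m-4}-1$ is already doubly even and hence self-orthogonal at exactly the stated length.
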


\begin{example}
    {\rm When $m=4$, from the binary self-orthogonal $[35,6,16]_2$ code with weights $16$ and $20$, we construct a linear $[47,6,16]_2$ code with weights $16$, $20$, $28$ and $32$. This self-orthogonal minimal codes violates the Ashikhmin-Barg condition. The optimal minimum weight of the corresponding $[47,6]_2$ code is $23$, see \cite{Grassl}.}
\end{example}

In Table \ref{tab:5-1}, we list binary minimal self-orthogonal codes violating the Ashikhmin-Barg condition constructed in Proposition 5.1 - 5.4. Here, $\bC$ denotes the original self-orthogonal code satisfying the Ashikhmin-Barg condition, while $\bC'$ denotes the self-orthogonal minimal binary code violating the Ashikhmin-Barg condition.\\

\begin{longtable}{|l|l|l|l|l|}
\caption{\label{tab:5-1} Binary self-orthogonal minimal codes }\\ \hline
$\bC$  & $\bC'$      & Weights  & Optimal parameters & References \\ \hline
$[7,3,4]_2$ &  $[11,3,4]_2$   & $[4,8]$   &   $[11,3,6]_2$ & Proposition 5.2\\ \hline
$[16,4,8]_2$ &  $[23,4,8]_2$   & $[8,16]$   &   $[23,4,12]_2$ & Proposition 5.2\\ \hline
$[24,5,12]_2$ &  $[32,5,12]_2$   & $[12,16,20,24]$   &   $[32,5,16]_2$ & Proposition 5.1\\ \hline
$[31,5,16]_2$ & $[47,5,16]_2$ & $[16,32]$   &   $[47,5,24]_2$ & Proposition 5.2\\ \hline
$[35,6,16]_2$ & $[47,6,16]_2$   & $[16,20,28,32]$ & $[47,6,23]_2$ & Proposition 5.4\\ \hline
$[48,6,24]_2$ & $[64,6,24]_2$   & $[24,32,40,48]$   &   $[64,6,32]_2$ & Proposition 5.1\\ \hline
% $[80,6,28]_2$   & $[28,32,52,56]$   &   $40$\\ \hline
% $[128,7,48]_2$   & $[48,64,80,96]$   &   $64$\\ \hline
% $[160,7,56]_2$   & $[56,64,104,112]$   &   $80$\\ \hline
\end{longtable}

\section{Weight distributions of some minimal codes violating the Ashikhmin-Barg condition}

In this section, we give a general result about weight distributions of minimal codes violating the Ashikhmin-Barg condition constructed in this paper. Since weight distributions of simplex complementary codes were calculated in \cite{ChenXie}, then weight distributions of minimal codes violating the Ashikhmin-Barg condition constructed in this paper can be determined completely, if the weight distribution of the projective linear code is known.

\begin{theorem}\label{Thm-6.1}
    Let $\mathbf{C}' \subset \mathbf{F}_q^{n+n'}$ be the minimal linear code constructed in Theorem 2.1 from an $[n, k, d]$ linear code $\mathbf{C}$ with the generator matrix ${\bf G}$ and the weight distribution $[A_0,A_{w_1},A_{w_2},\dots,A_{w_t}]$. Let $\bC^* \subset \mathbb{F}_q^n$ denote the $(k-1)$-dimensional subcode of $\bC$ generated by removing the first row of ${\bf G}$, and assume its weight distribution is $[A^*_0,A^*_{w_1}, A^*_{w_2}, \dots, A^*_{w_t}]$. Then the set of nonzero weights in $\bC'$ is $$\{w_1,w_2,\dots,w_t \} \cup \{w_1+n',w_2+n',\dots,w_t+n' \}.$$ More precisely, the weight distribution of $\bC'$ is given by:
    $$[A'_0,A'_{w_1}, \dots, A'_{w_t}, A'_{w_1 + n'}, \dots, A'_{w_t + n'}],$$
    where
    $$A'_0=1, \ A'_{w_i} = A^*_{w_i} \ \text{and} \ A'_{w_i + n'} = A_{w_i} - A^*_{w_i}, \quad 1 \le i \le t.$$
\end{theorem}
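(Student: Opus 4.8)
The plan is to track how each codeword of $\bC'$ arises from the generator matrix $\bG'$ constructed in the proof of Theorem 2.1 and to read off its weight. Recall that $\bG'$ has first row $\br_1' = (c_1',\dots,c_{n'}',c_1,\dots,c_n)$ with all $c_j' \in \bF_q^*$, and rows $\br_i' = (\bzero, \br_i)$ for $2 \le i \le k$. Every codeword of $\bC'$ has the form $\bx = \mu_1 \br_1' + \sum_{i=2}^k \mu_i \br_i'$; I would split into the two cases $\mu_1 = 0$ and $\mu_1 \ne 0$, exactly as in the minimality proof, and count how many codewords fall in each case together with their weights.

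First, the case $\mu_1 = 0$: here $\bx = (\bzero, \bc)$ where $\bc = \sum_{i=2}^k \mu_i \br_i$ is an arbitrary codeword of the subcode $\bC^* \subset \bF_q^n$ spanned by $\br_2,\dots,\br_k$. Since the first $n'$ coordinates are zero, $\wt(\bx) = \wt(\bc)$, so each codeword of $\bC^*$ of weight $w_i$ contributes exactly one codeword of $\bC'$ of weight $w_i$. This accounts for the $A'_{w_i} = A^*_{w_i}$ part and, for $i$ with $w_i \ne 0$, gives all codewords of $\bC'$ whose weight lies in $\{w_1,\dots,w_t\}$ and whose support avoids the first $n'$ positions.

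Second, the case $\mu_1 \ne 0$: rescaling, I may take $\mu_1 = 1$, so $\bx = (c_1',\dots,c_{n'}', \bu)$ where $\bu = (c_1,\dots,c_n) + \sum_{i=2}^k \mu_i \br_i$ ranges over the coset $\br_1 + \bC^*$, i.e. over all codewords of $\bC$ not in $\bC^*$; as $(\mu_2,\dots,\mu_k)$ varies this is a bijection onto $\bC \setminus \bC^*$. Because every $c_j'$ is nonzero, the first $n'$ coordinates contribute exactly $n'$ to the weight, hence $\wt(\bx) = n' + \wt(\bu)$. Thus a codeword $\bu \in \bC \setminus \bC^*$ of weight $w_i$ yields a codeword of $\bC'$ of weight $w_i + n'$. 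The number of weight-$w_i$ codewords in $\bC \setminus \bC^*$ is $A_{w_i} - A^*_{w_i}$ (for $w_i = 0$ this is $A_0 - A^*_0 = 0$, consistent with $\bG'$ being full rank so that $\bzero \notin \bC\setminus\bC^*$ in the sense that the all-zero codeword of $\bC'$ only comes from the $\mu_1=0$ branch). This gives $A'_{w_i + n'} = A_{w_i} - A^*_{w_i}$. Combining the two cases, every nonzero codeword of $\bC'$ has weight in $\{w_1,\dots,w_t\} \cup \{w_1+n',\dots,w_t+n'\}$ with the stated multiplicities, and the counts sum to $\sum_i A^*_{w_i} + \sum_i (A_{w_i} - A^*_{w_i}) = \sum_i A_{w_i} = q^k = |\bC'|$, which confirms nothing has been missed.

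The only real subtlety — and the step I would state carefully rather than wave at — is the bijectivity claims: that $(\mu_2,\dots,\mu_k) \mapsto \sum_{i=2}^k \mu_i \br_i$ parametrizes $\bC^*$ bijectively (immediate from $\bG'$ being full rank, so $\br_2,\dots,\br_k$ are independent) and that the map $(\mu_2,\dots,\mu_k)\mapsto \br_1 + \sum \mu_i\br_i$ is a bijection onto $\bC\setminus\bC^*$ (again from independence, plus $\br_1 \notin \bC^*$). One should also note the edge case where $w_i + n' = w_j$ for some $i,j$ or $w_i = w_j + n'$; the statement as written lists the weights as a union of two possibly-overlapping sets, and if an overlap occurs the corresponding frequencies simply add, so the displayed formula should be read with that convention. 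Apart from that bookkeeping, the argument is a direct weight computation on the explicit generator matrix and requires no further input beyond the construction of Theorem 2.1.
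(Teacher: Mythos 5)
Your proof is correct and follows essentially the same route as the paper: split the codewords of $\bC'$ according to whether the coefficient $\mu_1$ of the first row of ${\bf G}'$ vanishes, note that the first $n'$ coordinates contribute weight $0$ or $n'$ accordingly, and count via the bijection with $\bC^*$ resp. $\bC\setminus\bC^*$. One cosmetic slip: for $q>2$ the single coset ${\bf r}_1+\bC^*$ is not all of $\bC\setminus\bC^*$ (that set is the union of the $q-1$ cosets $\lambda{\bf r}_1+\bC^*$, $\lambda\neq 0$), so you should not fix $\mu_1=1$ but instead let $(\mu_1,\dots,\mu_k)$ with $\mu_1\neq 0$ range freely, under which the last $n$ coordinates sweep out $\bC\setminus\bC^*$ bijectively while the first block always has weight $n'$ — this is what your final count actually uses, so the conclusion stands; your caveat about possible coincidences $w_i+n'=w_j$ is a fair point that the paper itself leaves implicit.
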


\begin{proof}
{\rm
    Let ${\bf G}$ be a generator matrix of the linear code ${\bf C}$, with the first row ${\bf r}_1$ equal to a codeword $(c_1,\ldots, c_n)$ with the maximum weight $w_{max}$, and the second row ${\bf r}_2$ equal to a codeword with the minimum weight $w_{min}$. Let ${\bf G}'$ be a $k \times (n+n')$ matrix with the first row ${\bf r}_1'$ equal to $(c_1',\ldots, c_n',c_1, \ldots, c_n)$, where $c_1', \ldots, c_n'$ are nonzero elements in ${\bf F}_q$, and the other rows are of the form ${\bf r}_i'=({\bf 0}, {\bf r}_i)$, where ${\bf 0}$ is the zero vector in ${\bf F}_q^{n'}$, ${\bf r}_i$ is $i$-th row of ${\bf G}$, for $i=2,\ldots,k$. Let $\bC^*$ be the linear subcode of ${\bf C}'$ span by  ${\bf r}_2', \dots,{\bf r}_k' $.

    We analyze the weight distribution of $\bC'$ by considering the two cases of codewords ${\bf x} \in \bC'$.\\

    1) ${\bf x}$ is the linear combination of rows ${\bf r}_2',\ldots, {\bf r}_k'$ of ${\bf G}'$. In this case, ${\bf x} \in \bC^*$. The first $n'$ coordinates of ${\bf x}$ are zero, and the last $n$ coordinates form a codeword in $\bC^*$. Hence, the weights of such codewords belongs to $\{w_1, \dots, w_t\}$. Therefore, $$A'_{w_i} = A^*_{w_i}, \quad 1 \le i \le t.$$

    2) ${\bf x}$ is the linear combination of rows ${\bf r}_1', \ldots, {\bf r}_k'$ of ${\bf G}'$. Suppose that $${\bf x}=\mu_1{\bf r}_1'+\sum_{i=2}^k \mu_i {\bf r}_i',$$ where $\mu_1\neq 0$. Let ${\bf r}_1' = ({\bf a},{\bf r}_1)$ and ${\bf r}_i' = ({\bf 0}, {\bf r}_i)$ for $2 \le i \le k$ and ${\bf a} \in ({\bf F}_q^*)^{n'}$, then

    $${\bf x} = \left( \mu_1 {\bf a}, \mu_1 {\bf r}_1 + \sum_{i=2}^k \mu_i {\bf r}_i \right) = ({\bf x}_{1}, {\bf x}_{2}).$$

    Since $\mu_1 \ne 0$ and ${\bf a} \ne {\bf 0}$, ${wt}({\bf x}_{1}) = n'$. Furthermore, ${\bf x}_{2} \in \bC$, then ${\bf x}_2$ has weight $w_i \in \{w_1, \dots, w_t\}$. It follows that
    $$wt({\bf x}) =  wt({\bf x}_{1}) + wt({\bf x}_{2})= n' + w_i,\ 1 \le i \le t.$$

    There are $A_{w_i}$ codewords of weight $w_i$ in $\bC$, and $A^*_{w_i}$ of them appear in Case 1, then the number of codewords in $\bf C'$ of weight $w_i + n'$ is $$A'_{w_i + n'} = A_{w_i} - A^*_{w_i}, \quad 1 \le i \le t.$$

    The conclusion is proved.
}
\end{proof}

The weight distribution of the $q$-ary simplex $[\frac{q^m-1}{q-1},m, q^{m-1}]_q$ code, is $A_0=1$ and $A_{q^{m-1}}=q^{m-1}$. It is obvious that the $q$-ary simplex code is a minimal code satisfying the Ashikhmin-Barg condition. Applying Theorem 2.1 with the extension length $n'=\left \lceil \frac{q^{m-1}}{q-1}  \right \rceil$, we construct a minimal $[\frac{q^{m}-1}{q-1}+n', m, q^{m-1} ]_q$ code violating the Ashikhmin-Barg condition. Its nonzero weights are $q^{m-1}$ and $q^{m-1}+n'$. For example, when $q=2$, the two-weight linear $[5,2,2]_2$, $[11,3,4]_2$ and $[23,4,8]_2$ codes are obtained. The corresponding optimal minimum weights are $3$, $6$ and $12$. Similarly, two-weight linear codes $[6,2,3]_3$, $[18,3,9]_3$, $[7,2,4]_4$, $[27,3,16]_4$, $[8,2,5]_5$, $[10,2,7]_7$, $[11,2,8]_8$ and $[12,2,9]_9$ are constructed. The corresponding optimal minimum weights are $4$, $12$, $5$, $20$, $6$, $8$, $9$ and $10$, see \cite{Grassl}.

\begin{proposition}\label{P-6.1}
    Let $q$ be a prime power, and $m \ge 2$ be an integer. Set $n'=\left \lceil \frac{q^{m-1}}{q-1}  \right \rceil$. A family of minimal linear $[\frac{q^{m}-1}{q-1}+n', m, q^{m-1} ]_q$ codes with the weight distribution given in Table \ref{tab:6-1} is constructed from the $q$-ary simplex $[\frac{q^m-1}{q-1},m, q^{m-1}]_q$ code. These codes violate the Ashikhmin-Barg condition.
\end{proposition}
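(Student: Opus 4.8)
The plan is to apply Theorem 2.1 directly to the $q$-ary simplex $[\frac{q^m-1}{q-1}, m, q^{m-1}]_q$ code, and then use Theorem \ref{Thm-6.1} to read off the weight distribution. First I would verify that the simplex code satisfies the Ashikhmin-Barg condition: since its only nonzero weight is $q^{m-1}$, we have $w_{min} = w_{max} = q^{m-1}$, hence $\frac{w_{min}}{w_{max}} = 1 > \frac{q-1}{q}$. Thus the simplex code is a minimal code satisfying the Ashikhmin-Barg condition, and Theorem 2.1 applies with $w_{max} = w_{min} = q^{m-1}$. The extension length prescribed by Theorem 2.1 is $n' = \lceil \frac{q \cdot q^{m-1}}{q-1} \rceil - q^{m-1} = \lceil \frac{q^m}{q-1} \rceil - q^{m-1} = \lceil \frac{q^m - q^{m-1}(q-1)}{q-1} \rceil = \lceil \frac{q^{m-1}}{q-1} \rceil$, which matches the statement. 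The resulting code $\bC'$ then has length $\frac{q^m-1}{q-1} + n'$, dimension $m$, and is minimal violating the Ashikhmin-Barg condition by Theorem 2.1.

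Next I would compute the weight distribution using Theorem \ref{Thm-6.1}. The subcode $\bC^*$ obtained by deleting the first row of the simplex generator matrix is an $(m-1)$-dimensional simplex code $[\frac{q^m-1}{q-1}, m-1, q^{m-1}]_q$ in its standard embedding, whose codewords all have weight $q^{m-2}\cdot q = q^{m-1}$ in the ambient space — wait, more care is needed here, since $\bC^*$ lives in $\mathbb{F}_q^{(q^m-1)/(q-1)}$ and a simplex-type subcode has its nonzero codewords of weight $q^{m-2}\cdot q$... Actually the cleanest route: in the simplex code every nonzero codeword has weight exactly $q^{m-1}$, and $\bC^*$ is simply a subcode, so all $q^{m-1}-1$ of its nonzero codewords also have weight $q^{m-1}$. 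So in the notation of Theorem \ref{Thm-6.1} with the single nonzero weight $w_1 = q^{m-1}$, we have $A_{w_1} = q^m - 1$ (total nonzero codewords of the simplex code) and $A^*_{w_1} = q^{m-1} - 1$. Therefore $A'_{w_1} = A^*_{w_1} = q^{m-1} - 1$ and $A'_{w_1 + n'} = A_{w_1} - A^*_{w_1} = (q^m - 1) - (q^{m-1} - 1) = q^m - q^{m-1} = q^{m-1}(q-1)$.

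Collecting these, the weight distribution of $\bC'$ is $A'_0 = 1$, $A'_{q^{m-1}} = q^{m-1} - 1$, and $A'_{q^{m-1} + n'} = q^{m-1}(q-1)$, which together account for $1 + (q^{m-1} - 1) + q^{m-1}(q-1) = q^{m-1} + q^m - q^{m-1} = q^m$ codewords, confirming $|\bC'| = q^m$ as required. This is exactly the content recorded in Table \ref{tab:6-1}, so the proposition follows. The minimality and the violation of the Ashikhmin-Barg condition are inherited from Theorem 2.1 without further argument, since $\frac{w_{min}(\bC')}{w_{max}(\bC')} = \frac{q^{m-1}}{q^{m-1}+n'} \le \frac{q-1}{q}$ by the choice of $n'$.

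I do not anticipate a substantive obstacle: the only point requiring mild care is the claim that the deleted-row subcode $\bC^*$ is all of weight $q^{m-1}$, which holds because every nonzero codeword of the simplex code (and hence of any subcode) is constant-weight $q^{m-1}$; one should also double-check that $n' \ge 1$ so that Theorem 2.1 genuinely produces a longer code (indeed $\lceil q^{m-1}/(q-1)\rceil \ge 1$ for all $q \ge 2$, $m \ge 2$), and that the bound $m \ge 2$ in the statement is what guarantees the subcode $\bC^*$ is nontrivial and the dimension count $\frac{q^m-1}{q-1} > \frac{q^{m-1}-1}{q-1}$ is strict.
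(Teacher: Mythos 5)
Your proposal is correct and follows essentially the same route as the paper: verify the simplex code satisfies the Ashikhmin--Barg condition, apply Theorem 2.1 with $n'=\lceil q^{m-1}/(q-1)\rceil$, and read off the two weights $q^{m-1}$ and $q^{m-1}+n'$ with frequencies $q^{m-1}-1$ and $(q-1)q^{m-1}$ via Theorem \ref{Thm-6.1}, using the constant-weight property of the simplex code to handle the subcode $\bC^*$. Your count $A_{q^{m-1}}=q^m-1$ for the simplex code is the right one (the paper's surrounding text misstates it as $q^{m-1}$), and your totals match Table \ref{tab:6-1}.
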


As a corollary of Proposition \ref{P-6.1}, we have the following result.

\begin{corollary}\label{C-6.1}
    Let $q=3$, $m\ge2$ be an integer. Set $n'=\left \lceil \frac{3^{m-1}}{2}  \right \rceil+1$. A family of minimal ternary self-orthogonal linear $[\frac{3^{m}-1}{2}+n', m, 3^{m-1} ]_3$ codes with the weight distributions given in Table \ref{tab:6-1} is constructed from the ternary simplex $[\frac{3^m-1}{2},m, 3^{m-1}]_q$ code. These codes violate the Ashikhmin-Barg condition.
\end{corollary}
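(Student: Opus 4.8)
The plan is to obtain this corollary as a special case of Proposition \ref{P-6.1} together with the self-orthogonality mechanism of Corollary 2.1, rather than re-proving everything from scratch. First I would recall that the ternary simplex $[\frac{3^m-1}{2}, m, 3^{m-1}]_3$ code is doubly a minimal code satisfying the Ashikhmin-Barg condition (its unique nonzero weight $3^{m-1}$ gives the ratio $1>\frac{2}{3}$), and, crucially, it is self-orthogonal: over $\mathbb{F}_3$ the simplex code is a constant-weight code whose weight $3^{m-1}$ is divisible by $3$, and the standard computation of the inner product of two simplex codewords (both are balanced characters of $\mathbb{F}_3^m$) shows every pairwise inner product vanishes. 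So the hypotheses of Corollary 2.1 are met with $q=3$.

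Next I would apply the construction of Theorem 2.1 exactly as in Proposition \ref{P-6.1}, but with the extra column prescribed by Corollary 2.1 to preserve self-orthogonality: instead of appending $n'_0 = \lceil \frac{3^{m-1}}{2}\rceil$ new nonzero coordinates to the generator matrix, we append $n' = \lceil \frac{3^{m-1}}{2}\rceil + 1$ coordinates, choosing the entries of the extended first row so that the augmented code remains self-orthogonal. Concretely, one picks the $n'$ new coordinates $c_1',\dots,c_{n'}'\in\mathbb{F}_3^*$ of the first row so that $\sum_{j=1}^{n'} (c_j')^2 \equiv -\,\langle \mathbf{r}_1,\mathbf{r}_1\rangle \equiv 0 \pmod 3$ (note $(c_j')^2 = 1$ for every nonzero $c_j'\in\mathbb{F}_3$, so this just says $n'\equiv 0\pmod 3$, which is why the "$+1$" — or more precisely an appropriate small adjustment — is available); one must also arrange $\sum_j c_j' \cdot 0 = 0$ for the inner products of the first row with the remaining rows, which is automatic since rows $\mathbf{r}_2',\dots,\mathbf{r}_k'$ are zero on the new coordinates. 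Thus the enlarged code $\mathbf{C}'$ is self-orthogonal, has length $\frac{3^m-1}{2}+n'$, dimension $m$, minimum weight $3^{m-1}$, and by Theorem 2.1 it is minimal and violates the Ashikhmin-Barg condition since $\frac{3^{m-1}}{3^{m-1}+n'}\le\frac{2}{3}$.

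Finally, for the weight distribution I would invoke Theorem \ref{Thm-6.1}: the subcode $\mathbf{C}^*$ obtained by deleting the first row of the simplex generator matrix is the $(m-1)$-dimensional simplex-type code whose weight distribution is $A^*_0 = 1$, $A^*_{3^{m-1}} = 0$ (in fact all nonzero codewords of $\mathbf{C}^*$ already have weight $3^{m-1}$ as well, since $\mathbf{C}^*$ is itself a shortened simplex code — here one must be slightly careful and record the correct distribution $A^*_{3^{m-1}} = 3^{m-1}-1$ after accounting for the extra row structure), so that $A'_{3^{m-1}} = A^*_{3^{m-1}}$ and $A'_{3^{m-1}+n'} = 3^{m-1} - A^*_{3^{m-1}}$; this is exactly the two-line distribution already displayed in Table \ref{tab:6-1} for Proposition \ref{P-6.1}, so the corollary inherits it verbatim. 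The one point needing care — and the main obstacle — is verifying that the single extra coordinate (the "$+1$" in $n'$) genuinely suffices to force self-orthogonality over $\mathbb{F}_3$ for every $m\ge 2$: one must check the parity/divisibility bookkeeping $\sum_{j}(c_j')^2 + \langle\mathbf{r}_1,\mathbf{r}_1\rangle \equiv 0 \pmod 3$ is solvable with $c_j'\in\mathbb{F}_3^*$ and exactly $n' = \lceil\frac{3^{m-1}}{2}\rceil+1$ such coordinates, and confirm that adding that coordinate does not change the minimum weight or spoil minimality (both follow from Theorem 2.1 applied to the code of length $n'$, since $n'$ still satisfies $n'\ge\lceil\frac{3^{m-1}}{2}\rceil = \lceil\frac{q\,w_{min}}{q-1}\rceil - w_{max}$). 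Everything else is a routine specialization.
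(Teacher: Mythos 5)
Your proposal is correct and follows essentially the same route as the paper, which states this corollary without proof as a specialization of Proposition 6.1 combined with the self-orthogonality device of Corollary 2.1. In fact you supply the one detail the paper leaves implicit — that $n'=\lceil 3^{m-1}/2\rceil+1=\tfrac{3^{m-1}+3}{2}=3\cdot\tfrac{3^{m-2}+1}{2}\equiv 0\pmod 3$, so the single extra all-nonzero coordinate makes $\langle \mathbf{r}_1',\mathbf{r}_1'\rangle\equiv n'+3^{m-1}\equiv 0\pmod 3$ and hence preserves self-orthogonality — and your weight-distribution count via Theorem 6.1 (with $A^*_{3^{m-1}}=3^{m-1}-1$) matches Table 6 despite the momentary slip where you first wrote $A^*_{3^{m-1}}=0$.
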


\begin{example}
    {\rm  When $m=2$, a two-weight minimal ternary code with parameters $[7,2,3]_3$ and the maximum weight $6$ is constructed from the simplex $[4,2,3]_3$ code. This code is a self-orthogonal minimal code violating the Ashikhmin-Barg condition. The corresponding optimal weight is $5$, see \cite{Grassl}.}
\end{example}

\begin{longtable}{|l|l|}
\caption{\label{tab:6-1} Weight distribution of the codes in Proposition 6.1}\\ \hline
Weight          & Weight distribution \\ \hline
$0$             & $1$ \\ \hline
$q^{m-1}$       & $q^{m-1}-1$ \\ \hline
$q^{m-1}+n'$    & $(q-1) \cdot q^{m-1}$ \\ \hline
\end{longtable}

Let $m \geq 3$ be an odd integer. The dual codes of the primitive BCH codes with parameters $[2^m-1, 2^m-1-2m, 5]_2$ is a family of three-weight linear codes with parameters $[2^m-1, 2m, 2^{m-1}-2^{\frac{m-1}{2}}]_2$. The weight distributions are
    $$A_{2^{m-1}-2^{\frac{m-1}{2}}}=(2^m-1)(2^{m-2}+2^{\frac{m-3}{2}}),$$
    $$A_{2^{m-1}}=(2^m-1)(2^{m-1}+1),$$
    and
    $$A_{2^{m-1}+2^{\frac{m-1}{2}}}=(2^m-1)(2^{m-2}-2^{\frac{m-3}{2}}),$$
    see \cite[Theorem 4]{carlet}.
For all odd $m \geq 5$, these codes satisfy the Ashikhmin-Barg criterion for the minimality. From Theorem 2.1, these codes can be transformed to minimal codes violating the Ashikhmin-Barg condition.\\

\begin{proposition}\label{P-6.2}
    Let $m \geq 5$ be an odd integer. Set $n'=2^{m-1}-3 \cdot 2^{\frac{m-1}{2}}$. A family of minimal linear $[2^m + n' -1 , 2m, 2^{m-1}-2^{\frac{m-1}{2}}]_2$ codes with the weight distributions given in Table \ref{tab:6-2} is constructed. These codes violate the Ashikhmin-Barg condition.
\end{proposition}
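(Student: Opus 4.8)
The plan is to apply Theorem~\ref{Thm-6.1} directly to the known three-weight code $\bC$ with parameters $[2^m-1,2m,2^{m-1}-2^{\frac{m-1}{2}}]_2$ and the stated weight distribution from \cite[Theorem 4]{carlet}. First I would verify the Ashikhmin-Barg condition for $\bC$ when $m\geq 5$ is odd: the minimum weight is $w_{\min}=2^{m-1}-2^{\frac{m-1}{2}}$ and the maximum weight is $w_{\max}=2^{m-1}+2^{\frac{m-1}{2}}$, so I must check $\frac{2^{m-1}-2^{(m-1)/2}}{2^{m-1}+2^{(m-1)/2}}>\frac12$, which is equivalent to $2(2^{m-1}-2^{(m-1)/2})>2^{m-1}+2^{(m-1)/2}$, i.e. $2^{m-1}>3\cdot 2^{(m-1)/2}$, i.e. $2^{(m-1)/2}>3$; this holds for $m\geq 5$ odd. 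Hence $\bC$ is a minimal code satisfying the Ashikhmin-Barg condition, and Theorem~2.1 applies with the extension length $n'=\lceil \frac{q\,w_{\min}}{q-1}\rceil-w_{\max}=2(2^{m-1}-2^{(m-1)/2})-(2^{m-1}+2^{(m-1)/2})=2^{m-1}-3\cdot 2^{(m-1)/2}$, matching the value in the statement. The resulting code $\bC'$ then has parameters $[2^m-1+n',2m,2^{m-1}-2^{(m-1)/2}]_2$, which is exactly what is claimed.

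Next I would compute the weight distribution via Theorem~\ref{Thm-6.1}. For this I need the weight distribution of the subcode $\bC^*$ of codimension one inside $\bC$, obtained by deleting the row of $\bG$ of maximum weight. Since Theorem~2.1 requires the first row of $\bG$ to be a maximum-weight codeword $w_{\max}=2^{m-1}+2^{(m-1)/2}$, the subcode $\bC^*$ has dimension $2m-1$ and consists of all codewords of $\bC$ \emph{not} involving that fixed maximum-weight generator. The main obstacle is determining $A^*_{w_i}$, the number of codewords of each weight $w_i$ in $\bC^*$. For the three-weight codes from \cite{carlet} one can argue via the structure of these codes: $\bC^*$ is itself a linear code and its weight distribution can be obtained either from the known macroscopic structure of these BCH dual codes or by a direct counting argument using the fact that the $2m$-dimensional code is highly structured; I expect $A^*_{w_i}\approx \frac12 A_{w_i}$ up to lower-order corrections, and then Theorem~\ref{Thm-6.1} gives $A'_{w_i}=A^*_{w_i}$ and $A'_{w_i+n'}=A_{w_i}-A^*_{w_i}$. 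Filling in Table~\ref{tab:6-2} amounts to substituting these values.

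Once the weight distribution is in hand, the verification that $\bC'$ violates the Ashikhmin-Barg condition is immediate from Theorem~2.1 itself (this is part of the conclusion of Theorem~2.1, so strictly it need not be re-proven): we have $\frac{w_{\min}(\bC')}{w_{\max}(\bC')}\leq \frac{w_{\min}}{w_{\max}+n'}=\frac12$, since $w_{\max}+n'=(2^{m-1}+2^{(m-1)/2})+(2^{m-1}-3\cdot 2^{(m-1)/2})=2^m-2^{(m-1)/2}=2(2^{m-1}-2^{(m-1)/2})=2w_{\min}$. So the code is minimal but violates the condition. The hard part is genuinely the computation of the subcode weight distribution $[A^*_0,A^*_{w_1},A^*_{w_2},A^*_{w_3}]$; everything else is bookkeeping using Theorem~2.1 and Theorem~\ref{Thm-6.1}. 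I would isolate that computation as a lemma or handle it by invoking the known structure of the dual BCH codes, and then the proposition follows by assembling the pieces.
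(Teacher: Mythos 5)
Your reduction to Theorem 2.1 and Theorem \ref{Thm-6.1} is the same route the paper takes, and your bookkeeping is all correct: the Ashikhmin--Barg check $2^{(m-1)/2}>3$ for odd $m\geq 5$, the value $n'=2^{m-1}-3\cdot 2^{(m-1)/2}$, the new length, and the identity $w_{max}+n'=2w_{min}$ forcing the ratio to equal exactly $\tfrac12$. However, the proposition also asserts the explicit weight distribution of Table \ref{tab:6-2}, and that is precisely the piece you leave open. Your remark that $A^*_{w_i}\approx\tfrac12 A_{w_i}$ ``up to lower-order corrections'' is not a proof and is not the exact answer; worse, the weight distribution of a codimension-one subcode is \emph{not} determined by the weight distribution of $\bC$ alone, so no manipulation of $[A_0,A_{w_1},A_{w_2},A_{w_3}]$ can close this gap --- you must pick a particular hyperplane and use structural information about the code.

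The missing idea is to exploit cyclicity. The code $\bC$ is the dual of a cyclic BCH code, hence cyclic, so its automorphism group is transitive on coordinates and no coordinate is identically zero. Take $\bC^*$ to be the subcode of codewords vanishing at a fixed position $i$. Double counting the zero entries of the weight-$w$ codewords over all positions gives $\sum_j \#\{\bc\in\bC: wt(\bc)=w,\ c_j=0\}=(n-w)A_w$ with $n=2^m-1$, and transitivity makes every summand equal, so $A^*_w=\frac{n-w}{n}A_w$. Since $(2^m-1)$ divides each $A_w$ in the Carlet--Charpin--Zinoviev distribution, these are integers and reproduce exactly the first three rows of Table \ref{tab:6-2} (e.g.\ $A^*_{w_1}=(2^{m-1}+2^{(m-1)/2}-1)(2^{m-2}+2^{(m-3)/2})$); the last three rows are then $A_{w_i}-A^*_{w_i}$ by Theorem \ref{Thm-6.1}. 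One must also check that this $\bC^*$ is compatible with the construction of Theorem 2.1: it has dimension $2m-1$ (some codeword is nonzero at $i$), it omits a maximum-weight codeword (since $A^*_{w_{max}}<A_{w_{max}}$), which can serve as the first row ${\bf r}_1$ of ${\bf G}$, and it contains a minimum-weight codeword (since $A^*_{w_{min}}>0$), which can serve as the second row. With that lemma supplied, your outline becomes a complete proof.
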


\begin{longtable}{|l|l|}
\caption{\label{tab:6-2} Weight distribution of the codes in Proposition 6.2}\\ \hline
Weight                              & Weight distribution \\ \hline
$0$                                 & $1$ \\ \hline
$2^{m-1}-2^{\frac{m-1}{2}}$         & $(2^{m-1}+2^{\frac{m-1}{2}}-1)(2^{m-2}+2^{\frac{m-3}{2}})$ \\ \hline
$2^{m-1}$                           & $(2^{m-1}-1)(2^{m-1}+1)$ \\ \hline
$2^{m-1}+2^{\frac{m-1}{2}}$         & $(2^{m-1}-2^{\frac{m-1}{2}}-1)(2^{m-2}-2^{\frac{m-3}{2}})$ \\ \hline

$2^{m}-2^{\frac{m+3}{2}}$           & $(2^{m-1}-2^{\frac{m-1}{2}})(2^{m-2}+2^{\frac{m-3}{2}})$ \\ \hline
$2^{m}-3 \cdot 2^{\frac{m-1}{2}}$   & $(2^{m-1})(2^{m-1}+1)$ \\ \hline
$2^{m}-2^{\frac{m+1}{2}}$           & $(2^{m-1}+2^{\frac{m-1}{2}})(2^{m-2}-2^{\frac{m-3}{2}})$ \\ \hline

\end{longtable}

\begin{example}
    {\rm  Let $m=5$, the dual code of the primitive BCH $[31, 21, 5]_2$ code has parameters $[31,10,12]_2$ and the weight enumerator $1+310z^{12}+527z^{16}+186z^{20}$. With $n'=2\times12-20=4$, a four-weight $[35,10,12]_2$ code with the weight distribution $A'_{12}=(31-12)\times10=190$, $A'_{16}=12\times10 + (31-16)\times17=375$, $A'_{20}=16\times17 + (31-20)\times6 = 338$ and $A'_{24}=20\times6=120$ is constructed. This was confirmed by Magma. Notably, the minimal linear $[35, 10, 12]_2$ code is optimal, see \cite{Grassl}.}
\end{example}

\section{Conclusions}

Since the publication of \cite{CH,DHZ}, there have been many papers constructing several infinite families minimal linear codes violating the Ashikhmin-Barg condition, from few-weight linear codes, Boolean functions, posets, or partial difference sets. In this paper, we proved that minimal linear codes violating the Ashikhmin-Barg condition can be constructed from arbitrary projective linear codes. Then infinitely many families of minimal linear codes violating the Ashikhmin-Barg condition were constructed. Moreover, infinitely many families of self-orthogonal binary minimal linear codes violating the Ashikhmin-Barg condition were also given. Many minimal codes violating the Ashikhmin-Barg condition constructed in this paper have their parameters close to optimal or best known ones. Weight distributions of minimal linear codes violating the Ashikhmin-Barg condition constructed in this paper can be determined explicitly.\\

\begin{appendices}
\section{Minimal codes violating Ashikhmin-Barg condition constructed in this paper}\label{A-1}

In the following tables, ${\bf C}'$ is the minimal linear code violating the Ashikhmin-Barg condition. These minimal codes are compared with optimal or best known codes in \cite{Grassl}.

    \begin{longtable}{|l|l|l|l|l|}
    \caption{\label{tab:app} Minimal codes violating the Ashikhmin-Barg condition constructed in this paper }\\ \hline
    $q$ & $\bC$ & $\bC'$ & Optimal or best known parameters  & References \\ \hline
    $2$ & $[3,2,2]_2$ & $[5,2,2]_2$  & $[5,2,3]_2$  & Proposition \ref{P-6.1}\\ \hline
    $2$ & $[7,3,4]_2$ & $[11,3,4]_2$  & $[11,3,6]_2$  & Proposition \ref{P-6.1}\\ \hline
    $2$ & $[9,4,4]_2$ & $[11,4,4]_2$  & $[11,4,5]_2$  & Proposition \ref{P-5.4}\\ \hline
    $2$ & $[11,4,5]_2$ & $[13,4,5]_2$  & $[13,4,6]_2$  & Proposition \ref{P-4.9}\\ \hline
    $2$ & $[15,5,6]_2$ & $[17,5,6]_2$  & $[17,5,8]_2$  & Proposition \ref{P-4.4}\\ \hline
    $2$ & $[15,6,6]_2$ & $[17,6,6]_2$  & $[17,6,7]_2$  & Proposition \ref{P-4.2}\\ \hline
    $2$ & $[16,5,6]_2$ & $[18,5,6]_2$  & $[18,5,8]_2$  & Proposition \ref{P-4.8}\\ \hline
    $2$ & $[15,4,8]_2$ & $[23,4,8]_2$  & $[23,4,12]_2$  &  Proposition \ref{P-6.1}\\ \hline
    $2$ & $[19,5,8]_2$ & $[23,5,8]_2$  & $[23,5,11]_2$ & Proposition \ref{P-5.2}\\ \hline
    $2$ & $[22,5,10]_2$ & $[26,5,10]_2$  & $[26,5,12]_2$  & Proposition \ref{P-4.7}\\ \hline
    $2$ & $[23,5,10]_2$ & $[27,5,10]_2$  & $[27,5,13]_2$  & Proposition \ref{P-4.5}\\ \hline
    $2$ & $[20,5,8]_2$ & $[24,5,8]_2$  & $[24,5,12]_2$  & Proposition \ref{P-4.8}\\ \hline
    $2$ & $[25,5,12]_2$ & $[32,5,12]_2$  & $[32,5,16]_2$  & Proposition \ref{P-4.9}\\ \hline
    $2$ & $[31,6,12]_2$ & $[35,6,12]_2$  & $[35,6,16]_2$  & Proposition \ref{P-4.4}\\ \hline
    $2$ & $[31,10,12]_2$ & $[35,10,12]_2$ & $[35,10,12]_2$  &  Proposition \ref{P-6.2}\\ \hline %Section 6
    $2$ & $[27,5,13]_2$ & $[37, 5, 13]_2$  & $[37, 5, 18]_2$  & Proposition \ref{P-4.9}\\ \hline

    $3$ & $[4,2,3]_3$ & $[6,2,3]_3$ & $[6,2,4]_3$  & Proposition \ref{P-6.1}\\ \hline
    $3$ & $[4,2,3]_3$ & $[7,2,3]_3$ & $[7,2,5]_3$  & Corollary \ref{C-6.1}\\ \hline %自正交
    $3$ & $[9,3,5]_3$ & $[10,3,5]_3$ & $[10,3,6]_3$  & Proposition \ref{P-4.10}\\ \hline
    $3$ & $[8,2,6]_3$ & $[12,2,6]_3$ & $[12,2,8]_3$  & Theorem 2.1\\ \hline %三元自正交
    $3$ & $[13,3,9]_3$ & $[18,3,9]_3$ & $[18,3,12]_3$  & Proposition \ref{P-6.1}\\ \hline %自正交
    $3$ & $[13,3,9]_3$ & $[19,3,9]_3$ & $[19,3,12]_3$  & Corollary \ref{C-6.1}\\ \hline %自正交
    $3$ & $[26,6,15]_3$ & $[29,6,15]_3$ & $[29,6,16]_3$  & Theorem 2.1\\ \hline  %三元自正交
    % $3$ & $[36,4,20]_3$ & $24$ \\ \hline  %由best $[7,3,4]_3$
    $3$ & $[40,4,24]_3$ & $[47,4,24]_3$ & $[47,4,30]_3$  & Theorem 2.1\\ \hline  %三元自正交
    $3$ & $[52,6,30]_3$ & $[56,6,30]_3$ & $[56,6,36]_3$  &  Theorem 2.1\\ \hline  %三元自正交
    $3$ & $[56,6,30]_3$ & $[60,6,30]_3$ & $[60,6,36]_3$  & Theorem 2.1\\ \hline  %三元自正交
    $3$ & $[104,8,60]_3$ & $[111,8,60]_3$ & $[111,8,66]_3$  & Theorem 2.1\\ \hline  %三元自正交

    $4$ & $[5,2,4]_4$ & $[7,2,4]_4$ & $[7,2,5]_4$  & Proposition \ref{P-6.1}\\ \hline
    $4$ & $[21,3,16]_4$ & $[27,3,16]_4$ & $[27,3,20]_4$  & Proposition \ref{P-6.1}\\ \hline
    $5$ & $[6,2,5]_5$	& $[8,2,5]_5$ & $[8,2,6]_5$  & Proposition \ref{P-6.1}\\ \hline
    $5$ & $[23,3,19]_5$ & $[28,3,19]_5$  &  $[28,3,22]_5$ & Proposition \ref{P-4.10}\\ \hline

    $7$ & $[8,2,7]_7$	 & $[10,2,7]_7$ & $[10,2,8]_7$  & Proposition \ref{P-6.1}\\ \hline

    $8$ & $[9,2,8]_8$	 & $[11,2,8]_8$ & $[11,2,9]_8$  & Proposition \ref{P-6.1}\\ \hline

    $9$ & $[10,2,9]_9$ & $[12,2,9]_9$ & $[12,2,10]_9$  & Proposition \ref{P-6.1}\\ \hline

    \end{longtable}	

\begin{longtable}{|l|l|l|l|}
    \caption{\label{tab:from-codetable} $q$-ary minimal codes violating the Ashikhmin-Barg condition constructed from best known codes in \cite{Grassl}}\\ \hline
    $q$ & $\bf C$   & $\bf C'$  & Optimal or best known parameters \\ \hline
    $2$ & $[32, 7, 14]_2$ & $[38, 7, 14]_2$ & $[38, 7, 16]_2$ \\ \hline %由best $[32, 7, 14]_2$
    $2$ & $[35, 8, 15]_2$ & $[41, 8, 15]_2$ & $[41, 8, 17]_2$ \\ \hline %由best $[35, 8, 15]_2$
    $2$ & $[38, 9, 16]_2$ & $[42, 9, 16]_2$ & $[42, 9, 17]_2$ \\ \hline %由best $[38, 9, 16]_2$
    $2$ & $[43, 9, 18]_2$ & $[47, 9, 18]_2$ & $[47, 9, 20]_2$ \\ \hline %由best $[43, 9, 18]_2$
    $2$ & $[45, 9, 19]_2$ & $[51, 9, 19]_2$ & $[51, 9, 22]_2$ \\ \hline %由best $[45, 9, 19]_2$
    $2$ & $[62,9,28]_2$ & $[74,9,28]_2$ & $[74,9,32]_2$ \\ \hline
    $2$ & $[112,15,45]_2$ & $[125,15,45]_2$ & $[125,15,53]_2$ \\ \hline

    $3$ & $[26,6,15]_3$ & $[28,6,15]_3$ & $[28,6,15]_3$ \\ \hline
    $3$ & $[66,8,38]_3$ & $[69,8,38]_3$ & $[69,8,39]_3$ \\ \hline
    $3$ & $[110,9,64]_3$ & $[112,9,64]_3$ & $[112,9,66]_3$ \\ \hline
    $3$ & $[182,15,99]_3$ & $[184,15,99]_3$ & $[184,15,99]_3$ \\ \hline  %由best $[182,15,99]_3$
    $3$ & $[208,14,117]_3$ & $[216,14,117]_3$ & $[216,14,120]_3$ \\ \hline  %由best $[208,14,117]_3$

    $4$ & $[10,2,8]_4$ & $[13,2,8]_4$ & $[13,2,10]_4$ \\ \hline
    $4$ & $[26,3,19]_4$ & $[28,3,19]_4$ & $[28,3,20]_4$ \\ \hline
    $4$ & $[57,6,38]_4$ & $[58,6,38]_4$ & $[58,6,39]_4$ \\ \hline
    $4$ & $[73,7,49]_4$ & $[75,7,49]_4$ & $[75,7,51]_4$ \\ \hline

    $5$ & $[27,3,21]_5$ & $[29,3,21]_5$ & $[29,3,23]_5$ \\ \hline %由best $[27, 3, 21]_5$
    $5$ & $[33,3,25]_5$ & $[34,3,25]_5$ & $[34,3,26]_5$ \\ \hline %由best $[33, 3, 25]_5$
    $5$ & $[38,3,30]_5$ & $[41,3,30]_5$ & $[41,3,32]_5$ \\ \hline %由best $[38,3,30]_5$
    $5$ & $[57,4,44]_5$ & $[58,4,44]_5$ & $[58,4,45]_5$ \\ \hline %由best $[57,4,44]_5$
    $5$ & $[62,6,45]_5$ & $[64,6,45]_5$ & $[64,6,45]_5$ \\ \hline   %由best $[62,6,45]_5$
    $5$ & $[66,4,51]_5$ & $[67,4,51]_5$ & $[67,4,52]_5$ \\ \hline   %由best $[66,4,51]_5$
    $5$ & $[70,4,55]_5$ & $[74,4,55]_5$ & $[74,4,58]_5$ \\ \hline   %由best $[70,4,55]_5$
    $5$ & $[83,5,63]_5$ & $[86,5,63]_5$ & $[86,5,65]_5$ \\ \hline   %由best $[83,5,63]_5$
    $5$ & $[90,5,69]_5$ & $[93,5,69]_5$ & $[93,5,70]_5$ \\ \hline   %由best $[90,5,69]_5$
    $5$ & $[98,5,75]_5$ & $[102,5,75]_5$ & $[102,5,78]_5$ \\ \hline   % 由best $[98,5,75]_5$
    $5$ & $[115,5,88]_5$ & $[117,5,88]_5$ & $[117,5,90]_5$ \\ \hline   % 由best $[115,5,88]_5$
    $5$ & $[124,9,90]_5$ & $[127,9,90]_5$ & $[127,9,91]_5$ \\ \hline   % 由best $[124,9,90]_5$

    $7$ & $[16,2,14]_7$ & $[19,2,14]_7$ & $[19,2,16]_7$ \\ \hline   %由best $[16,2,14]_7$
    $7$ & $[19,2,16]_7$ & $[20,2,16]_7$ & $[20,2,17]_7$ \\ \hline   %由best $[19,2,16]_7$
    $7$ & $[21,2,18]_7$ & $[23,2,18]_7$ & $[23,2,20]_7$ \\ \hline   %由best $[21,2,18]_7$
    $7$ & $[33,2,28]_7$ & $[36,2,28]_7$ & $[36,2,31]_7$ \\ \hline   %由best $[33,2,28]_7$
    $7$ & $[42,2,36]_7$ & $[46,2,36]_7$ & $[46,2,40]_7$ \\ \hline   %由best $[42,2,36]_7$
    $7$ & $[51,3,43]_7$ & $[53,3,43]_7$ & $[53,3,45]_7$ \\ \hline   %由best $[51,3,43]_7$
    $7$ & $[67,3,56]_7$ & $[68,3,56]_7$ & $[68,3,57]_7$ \\ \hline   %由best $[67,3,56]_7$
    $7$ & $[71,3,60]_7$ & $[72,3,60]_7$ & $[72,3,61]_7$ \\ \hline   %由best $[71,3,60]_7$
    $7$ & $[76,3,64]_7$ & $[77,3,64]_7$ & $[77,3,65]_7$ \\ \hline   %由best $[76,3,64]_7$
    $7$ & $[80,3,68]_7$ & $[83,3,68]_7$ & $[83,3,70]_7$ \\ \hline   %由best $[80,3,68]_7$
    $7$ & $[90,3,77]_7$ & $[96,3,77]_7$ & $[96,3,82]_7$ \\ \hline   %由best $[90,3,77]_7$

    $8$ & $[23,2,20]_8$ & $[24,2,20]_8$ & $[24,2,21]_8$ \\ \hline   %由best $[23,2,20]_8$
    $8$ & $[29,2,25]_8$ & $[31,2,25]_8$ & $[31,2,27]_8$ \\ \hline   %由best $[29,2,25]_8$
    $8$ & $[103,4,88]_8$ & $[104,4,88]_8$ & $[104,4,88]_8$ \\ \hline   % 由best $[103,4,88]_8$

    $9$ & $[28,2,25]_9$ & $[31,2,25]_9$ & $[31,2,27]_9$ \\ \hline %由best $[28,2,25]_9$
    $9$ & $[41,2,36]_9$ & $[43,2,36]_9$ & $[43,2,38]_9$ \\ \hline %由best $[41,2,36]_9$
    $9$ & $[56,3,48]_9$ & $[57,3,48]_9$ & $[57,3,49]_9$ \\ \hline %由best $[56,3,48]_9$
    $9$ & $[62,3,54]_9$ & $[64,3,54]_9$ & $[64,3,56]_9$ \\ \hline %由best $[62,3,54]_9$
\end{longtable}

\end{appendices}

\end{document}